\newcommand{\eps}{\epsilon}
\DeclareMathOperator{\ddim}{ddim}
\DeclareMathOperator{\MST}{MST}
\def\denseformat{
\setlength{\textheight}{9.5in}
\setlength{\textwidth}{6.9in}
\setlength{\evensidemargin}{-0.3in}
\setlength{\oddsidemargin}{-0.3in}
\setlength{\headsep}{10pt}
\setlength{\topmargin}{-0.44in}
\setlength{\columnsep}{0.375in}
\setlength{\itemsep}{0pt}
}
\newtheorem{theorem}{Theorem}[section]
\newtheorem{lemma}[theorem]{Lemma}
\newtheorem{corollary}[theorem]{Corollary}
\newtheorem{fact}[theorem]{Fact}
\newtheorem{observation}[theorem]{Observation}
\def\boldhead#1:{\par\vskip 7pt\noindent{\bf #1:}\hskip 10pt}
\def\ithead#1:{\par\vskip 7pt\noindent{\it #1:}\hskip 10pt}
\def\ceil#1{\lceil #1\rceil}
\def\inline#1:{\par\vskip 7pt\noindent{\bf #1:}\hskip 10pt}
\def\midinline#1:{\par\noindent{\bf #1:}\hskip 10pt}
\def\dnsinline#1:{\par\vskip -7pt\noindent{\bf #1:}\hskip 10pt}
\def\ddnsinline#1:{\newline{\bf #1:}\hskip 10pt}
\def\largeinline#1:{\par\vskip 7pt\noindent{\large\bf #1:}\hskip 10pt}
\long\def\comment #1\commentend{}
\long\def\commhide #1\commhideend{}
\long\def\commfull #1\commend{#1}
\long\def\commabs #1\commenda{}
\long\def\commtim #1\commendt{#1}
\long\def\commb #1\commbend{}
\long\def\commedit #1\commeditend{} 
\long\def\commB #1\commBend{}       
\long\def\commex #1\commexend{}     
\long\def\commsiena #1\commsienaend{}  
\long\def\commBI #1\commBIend{}  
\long\def\CProof #1\CQED{}
\def\blackslug{\hbox{\hskip 1pt \vrule width 4pt height 8pt
    depth 1.5pt \hskip 1pt}}
\def\QED{\quad\blackslug\lower 8.5pt\null\par}
\def\inQED{\quad\quad\blackslug}
\long\def\PPP#1{\noindent{\bf Proof:}{ #1}{\quad\blackslug\lower 8.5pt\null}}
\long\def\denspar #1\densend
\newif\ifnotesw\noteswtrue
\ifnotesw\marginpar[\hfill\(\top\)]{\(\top\)}\fi}%
\ifnotesw\marginpar[\hfill\(\bot\)]{\(\bot\)}\fi}
\newcommand{\mnote}[1]%
    {\ifnotesw\marginpar%
        [{\scriptsize\it\begin{minipage}[t]{\marginparwidth}
        \raggedleft#1%
                        \end{minipage}}]%
        {\scriptsize\it\begin{minipage}[t]{\marginparwidth}
        \raggedright#1%
                        \end{minipage}}%
    \fi}
\def\MathF{\hbox{\rm I\kern-2pt F}}
\def\MathP{\hbox{\rm I\kern-2pt P}}
\def\MathR{\hbox{\rm I\kern-2pt R}}
\def\MathZ{\hbox{\sf Z\kern-4pt Z}}
\def\MathN{\hbox{\rm I\kern-2pt I\kern-3.1pt N}}
\def\MathC{\hbox{\rm \kern0.7pt\raise0.8pt\hbox{\footnotesize I}
\kern-4.2pt C}}
\def\MathQ{\hbox{\rm I\kern-6pt Q}}
\newsavebox{\ttop}\newsavebox{\bbot}
\def\eps{\epsilon}
\newcommand {\ignore} [1] {}
\begin{document}
\linenumbers
\title{Light spanners for Snowflake Metrics}
\author{
Lee-Ad Gottlieb \thanks{Department of Computer Science and Mathematics, Ariel University, Ariel, Israel.
E-mail: {\tt leead@ariel.ac.il}.}
\and
Shay Solomon \thanks{Department of Computer Science and Applied Mathematics, The Weizmann Institute of Science, Rehovot 76100, Israel.
E-mail: {\tt shay.solomon@weizmann.ac.il}.
This work is supported by the Koshland Center for basic Research.}}

\date{\empty}

\begin{titlepage}
\def\thepage{}
\maketitle

\begin{abstract}

A classic result in the study of spanners is the existence of light low-stretch spanners for Euclidean spaces. 
These spanners have arbitrary low stretch, and weight only a constant factor greater than that of the minimum spanning 
tree of the points (with dependence on the stretch and Euclidean dimension). A central open problem in this field asks 
whether other 
spaces admit low weight spanners as well -- for example metric space with low intrinsic dimension -- yet only a handful of 
results of this type are known. 

In this paper, we consider snowflake metric spaces of low intrinsic dimension. The $\alpha$-snowflake of a 
metric $(X,\delta)$ is the metric $(X,\delta^\alpha)$, for $0<\alpha<1$. By utilizing an approach completely 
different than those used for Euclidean spaces, we demonstrate that snowflake metrics admit light spanners. 
Further, we show that the spanner is of diameter $O(\log n)$, a result not possible for Euclidean spaces. As an 
immediate corollary to our spanner, we obtain dramatic improvements in algorithms for the traveling salesman 
problem in this setting, achieving a polynomial-time approximation scheme with near-linear runtime.
Along the way, we also show that all $\ell_p$ spaces admit light spanners, a result of interest in its own
right.
 
\end{abstract} 
\end{titlepage}

\pagenumbering {arabic}

\section{Introduction}

Given a complete graph $G$, a \emph{$(1+\epsilon)$-spanner} for $G$ is a subgraph $H \subset G$ which preserves 
all pairwise distances in $G$ to within a factor of $1+\epsilon$. Low-stretch spanners with additional favorable 
properties -- such as low degree, weight or small hop diameter -- have been the object of much study, in settings such
as Euclidean space, planar graph metrics, and metrics with low intrinsic dimension 
\cite{AS97,AMS94,ADMSS95,ACCDSZ-96,CG06,GR081,GR082,DES08,Sol11,ES13}.

A spanner $H$ is said to be \emph{light} if its weight is proportional to the weight of the minimum spanning 
tree of $G$, $w(\MST(G))$, and its \emph{lightness} is the constant (or term) multiplying $w(\MST(G))$.
A major result of the nineties is that $d$-dimensional Euclidean spaces admit light
$(1+\epsilon)$-spanners, with lightness $\epsilon^{-O(d)}$ \cite{DHN93}. 
An important result in its own right, the light Euclidean spanner is also a central component in the fastest 
polynomial time approximation scheme (PTAS)
for the Euclidean traveling salesman problem (TSP): Using a light spanner, a $(1+\epsilon)$-approximate tour can be
computed in time $2^{\epsilon^{O(d)}} n + 2^{O(d)} n \log n$ \cite{A-98,RS-98}.

The existence proof for light Euclidean spanners is complex. At its core, it relies on the \emph{leapfrog property} 
specific to Euclidean space. It seems difficult to extend this proof to other natural spaces,
and in fact light spanners are known for only a handful of settings. These include 
planar graphs \cite{ADDJS93},
unit disk graphs \cite{KPX-08},
and graphs of bounded pathwidth \cite{GH-12}
and bounded genus \cite{DMM-10}.
In fact, a central conjecture in this area asks whether all metric spaces $M$ with low intrinsic dimension admit light 
$(1+\epsilon)$-spanner. The best lightness bound known
for spanners in these metrics is $\Omega(\log n)$ \cite{Smid09,ES13}.

In this paper we take a step towards this conjecture, by showing that light spanners exist for
snowflake metrics of low doubling dimension; 
the $\alpha$-snowflake of a metric $(X,\delta)$ is the metric $(X,\delta^\alpha)$, with $0 < \alpha < 1$.
Snowflake metrics have been a focus of study in the recent literature \cite{A-83,GKL03,LMN-04,ABN-08,GK-11,BRS-11,NN-12,N-13}.
We will give two separate proofs for the existence of light spanners for snowflake metrics.
As an immediate corollary, we derive a fast approximation algorithm for TSP on snowflake metrics,
which yields a dramatic improvement on what was previously known \cite{BGK12}.

The first proof, presented in Section \ref{sec:first}, 
is based on a new observation: For any constant-dimensional vector space $V$,
the edges of a light $(1+\epsilon)$-spanner for space $(V,\ell_2)$ also form a light
$(1+\eps^{O(1)})$-spanner for any space $(V,\ell_p)$, when $p \ge 1$. 
This result is of independent interest, and also implies an efficient approximation
algorithm for traveling salesman in $\ell_p$ (see Corollary \ref{cor:lptsp}).
Further, combining this result with the fact that the $\alpha$-snowflake of any metric $M$
embeds with $1+\epsilon$ distortion into $\epsilon^{-O(\ddim(M))}$-dimensional $\ell_\infty$
\cite{HM-06}, we conclude in
Theorem \ref{thm:first-main} that every snowflake metric $M$ admits $(1+\epsilon)$-spanners of
lightness $2^{\epsilon^{-O(\ddim(M))}}$.

Unsatisfied with the lightness bound provided by Theorem \ref{thm:first-main}, we present a second proof
in Section \ref{sec:second}. This proof is more involved, but yields an exponentially better
lightness bound. The proof considers the standard net-tree spanner (NTS), and demonstrates
that for snowflake metrics, the NTS is light. 
In Theorem \ref{thm:basic}, we show that the NTS on a snowflake metric $M$ has stretch 
$(1+\epsilon)$ with lightness $\epsilon^{-O(\ddim(M))}$.
The advantage of this spanner over that of Section \ref{sec:first} is threefold:
(1) It bypasses the heavy machinery of the leapfrog property and the use of low-distortion embeddings. 
(2) The dependencies on $\eps$ and the doubling dimension in the lightness bound,
as well as the leading constant therein, are significantly smaller. This is of particular interest to
practitioners in the field.
(3) The NTS is the central tool in several spanner constructions that combine small weight with other favorable properties, 
and the weight bound in all these constructions depends on the weight of the NTS.
Consequently, we improve the weight bound in all these constructions (see Section \ref{sec:apps}).

An interesting property of our spanners is that they have logarithmic hop-diameter,
a result not possible for regular metric spaces. In fact, even 1-dimensional 
Euclidean space requires linear hop-diameter for light spanners \cite{DES08}.

The paper is organized as follows:
We close this introductory section with preliminary notes and definitions.
We present the first proof in Section \ref{sec:first}:
We show that there exists a light spanner for all $\ell_p$ spaces, and that this
implies a light spanner for snowflake metrics as well.
The second proof appears in Section \ref{sec:second}.
Finally, we detail some further applications of our proofs,
including TSP algorithms, in Section \ref{sec:apps}.

\subsection{Preliminaries}

Let $(X,\delta)$ be an arbitrary $n$-point metric.
Without loss of generality, we  assume that the minimum inter-point distance of $X$ is equal to $1$.
We denote by $\Delta = \max_{u, v \in X}\delta(u, v)$  the \emph{diameter} of $X$.

\paragraph{Doubling dimension.}  
The \emph{doubling dimension} of a metric space $(X, \delta)$, denoted by $\ddim(X)$ (or $\ddim$ when the context is 
clear), is the smallest value $\rho$ such that every ball in $X$ can be covered by $2^\rho$ balls of half the radius. A 
metric is called \emph{doubling} if its doubling dimension is bounded above by a constant \cite{GKL03}.

\paragraph{Hierarchical Nets.}  
A set $Y \subseteq X$ is called an \emph{$r$-cover} of $X$ if for any point $x \in X$ there is a point $y \in Y$, with $\delta(x, y) \le r$.
A  set $Y$ is an \emph{$r$-packing} if for any pair of distinct points $y, y' \in Y$, it holds that $\delta(y, y') > r$.
We say that a set $Y \subseteq X$ is an \emph{$r$-net} for $X$ if $Y$ is both an $r$-cover of $X$ and an $r$-packing.
By recursively applying the definition of doubling dimension, we can derive the following key fact \cite{GKL03}. 

\vspace{-0.04in}
\begin{fact} \label{prop:small_net}
Let $R \geq 2r > 0$ and let $Y \subseteq X$ be an $r$-packing  in a ball of radius $R$. Then, $|Y| \le (\frac{R}{r})^{2\ddim(X)}$.
\end{fact}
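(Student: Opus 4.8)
The plan is to unwind the definition of doubling dimension exactly once, in iterated form: repeatedly apply the covering property to replace the ball of radius $R$ by a bounded number of balls of radius at most $r/2$, and then observe that such a tiny ball can contain at most one point of an $r$-packing. Multiplying the two bounds yields the estimate.

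In detail, let $B = B(z,R)$ be the ball of radius $R$ containing $Y$, and set $\lambda = 2^{\ddim(X)}$, so that by definition every ball of radius $\rho$ in $X$ is covered by $\lambda$ balls of radius $\rho/2$. Applying this $k$ times (each ball of the current cover is split into $\lambda$ balls of half its radius) shows that $B$ is covered by at most $\lambda^k$ balls of radius $R/2^k$. Choose $k = \lceil \log_2(2R/r)\rceil$, i.e.\ the least integer with $R/2^k \le r/2$; then $B$ is covered by at most $\lambda^k$ balls, each of radius at most $r/2$. Now fix one such ball $B'$ of radius $\le r/2$: if $y,y' \in Y \cap B'$, then $\delta(y,y') \le 2\cdot(r/2) = r$, which contradicts the $r$-packing property unless $y=y'$, so $|Y\cap B'| \le 1$. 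Summing over the at most $\lambda^k$ balls of the cover gives $|Y| \le \lambda^k = 2^{\ddim(X)\cdot k}$. Finally, since $R \ge 2r$ we have $\log_2(2R/r) \ge 2$, and a short calculation bounds $k = \lceil\log_2(2R/r)\rceil$ by $2\log_2(R/r)$, whence $|Y| \le 2^{2\ddim(X)\log_2(R/r)} = \pare{\frac{R}{r}}^{2\ddim(X)}$.

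The argument is essentially routine, and there is no deep obstacle; the only two points that want a little care are (i) making sure the balls produced by the iteration have radius at most $r/2$ rather than merely at most $r$ — this is precisely what forces each of them to contain at most one packing point — and (ii) controlling the rounding hidden in the number $k$ of halving steps, so that the final exponent comes out as $2\ddim(X)$ rather than something slightly larger; the hypothesis $R \ge 2r$ is exactly what makes this rounding harmless.
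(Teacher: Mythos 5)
Your overall route is exactly the one the paper alludes to (the paper gives no proof of this Fact, only the remark that it follows ``by recursively applying the definition of doubling dimension'' together with a citation to \cite{GKL03}): iterate the doubling covering until the radius drops to at most $r/2$, note that a ball of radius $r/2$ contains at most one point of an $r$-packing (here you correctly use the strict inequality $\delta(y,y')>r$), and multiply. That part is fine.

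The flaw is in the final rounding step. The inequality $k=\lceil\log_2(2R/r)\rceil\le 2\log_2(R/r)$ that you invoke is \emph{not} implied by $R\ge 2r$: writing $x=\log_2(R/r)\ge 1$, you need $\lceil 1+x\rceil\le 2x$, which fails for all $x\in(1,\tfrac32)$, i.e.\ for $R/r\in(2,2\sqrt2)$. Concretely, for $R/r=2.1$ you get $k=\lceil\log_2 4.2\rceil=3$ while $2\log_2 2.1\approx 2.14$, so your argument only yields $|Y|\le 2^{3\ddim(X)}=8^{\ddim(X)}$, whereas the stated bound is $(2.1)^{2\ddim(X)}\approx 4.4^{\ddim(X)}$. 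What the halving argument actually proves is $|Y|\le 2^{\ddim(X)\lceil\log_2(2R/r)\rceil}\le\left(\frac{4R}{r}\right)^{\ddim(X)}$, which is at most $(R/r)^{2\ddim(X)}$ only when $R\ge 4r$ (or when $2R/r$ is an exact power of two, where the ceiling is lossless); in the window $2<R/r<2\sqrt2$ your chain of inequalities does not establish the Fact as literally stated, and I see no way to recover the exact constant $2$ in the exponent from pure halving there. For every use made of the Fact in this paper the distinction is immaterial -- all that is needed is a bound of the form $(R/r)^{O(\ddim(X))}$, and $(4R/r)^{\ddim(X)}$ serves just as well -- but as a verbatim proof of the displayed inequality your last sentence over-claims; either strengthen the hypothesis to $R\ge 4r$, or state the conclusion you actually prove, $|Y|\le\left(\frac{4R}{r}\right)^{\ddim(X)}$.
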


Write $\ell = \ceil{\log \Delta}$,
and let $\{N_i\}_{i \geq 0}^{\ell}$ be a sequence
of hierarchical nets, where $N_0 = X$ and for each $i \in [\ell]$, $N_i$ is
a $2^{i}$-net for $N_{i-1}$.
We refer to $N_i$ as the \emph{$i$-level net},
and the points of $N_i$ are called the \emph{$i$-level net-points}.
Note that $N_0 = X \supseteq N_1 \supseteq \ldots \supseteq N_\ell$,
and $N_\ell$ contains exactly one point.
The same point of $X$ may have instances in many nets; specifically, an $i$-level net-point is necessarily a $j$-level net-point,
for every $j \in [0,i]$. When we wish to refer to a specific instance of a point $p \in X$, 
which is determined uniquely by some level $i \in [0,\ell]$ (such that $p \in N_i$), we may denote it by the pair $(p,i)$.  

\paragraph{Net-tree Spanner.}
For each   $i \in [0,\ell-1]$,
\emph{cross edges} are added between $i$-level net-points that are within distance $\gamma \cdot 2^i$ from each other,
for some parameter $\gamma = \Theta(\frac{1}{\eps})$.
By Fact \ref{prop:small_net}, for each $i \in [0,\ell]$, the degree of any net-point $(p,i)$ in the net-tree spanner 
(due to $i$-level cross edges) is $\eps^{-O(\ddim)}$.

Let $R$ be the sum of radii of all net-points, where 
the radius $rad(p,i)$ of an $i$-level net-point $(p,i)$ is equal to $2^i$, disregarding the single net-point at level $\ell$.
It is easy to see that the weight of the net-tree spanner is given by $R \cdot \gamma \cdot \eps^{-\Theta(\ddim)} = R \cdot 
\eps^{-\Theta(\ddim)}$. However, as shown in Section \ref{sec:int}, $R$ may be as large as $\Theta(\log n) \cdot 
\omega(MST(X,\delta))$, even for 1-dimensional Euclidean metrics.
Consequently, the net-tree spanner in doubling metrics has lightness $\Theta(\log n) \cdot \eps^{-\Theta(\ddim)}$. 
On the other hand, it turns out that in \emph{snowflake} doubling metrics -- the net-tree spanner is light.

\section{Proof via $\ell_p$ space}\label{sec:first}

In this section we present our first proof that snowflake metrics admit light
$(1+\epsilon)$-spanner. We will first need to prove a certain property 
concerning Euclidean vectors (Section \ref{sec:prop}). We then show that given a
set of $d$-dimensional vectors $S$, the edges of a light $(1+\epsilon)$-spanner for $(S,\ell_2)$ 
also form a light $(1+\epsilon^{O(1)})$-spanner for all $(S,\ell_p)$ 
$p \ge 1$, which itself implies a light spanner for snowflake metrics.

Before presenting the proof for $\ell_p$ spaces,
we formally state the Euclidean light spanner theorem of \cite{DHN93}.
For an edge set $E$, let $w_p(E)$ be the sum of the lengths of the edges under $\ell_p$.
Let $\MST_p(S)$ be the edge set of the minimum spanning tree for $S$ under $\ell_p$.

\begin{theorem}\label{thm:l2}
Let $S$ be a $d$-dimensional point set of size $n$. Then there exists an edge set
$E$ which forms a $(1+\eps)$-spanner for $S$ under $\ell_2$, with
$w_2(E) = \eps^{-O(d)} w_2(\MST_2(S))$.
\end{theorem}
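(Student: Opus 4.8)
To prove this statement (it is the classical light Euclidean spanner theorem of \cite{DHN93}, which in the paper we simply invoke), the plan is to analyze the \emph{greedy spanner}. First I would sort all $\binom{n}{2}$ point pairs by non-decreasing $\ell_2$-length and scan them, adding the pair $(u,v)$ to $E$ exactly when the spanner built so far does not already contain a $u$--$v$ path of length at most $(1+\eps)\|u-v\|_2$. That the resulting $E$ is a $(1+\eps)$-spanner is immediate: a path that blocks an edge from being inserted consists only of edges no longer than that edge, and since edges are only added and never removed, such a path survives to the end; edges that are inserted span their own endpoints trivially. So the entire content is the weight bound.

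The engine for the weight bound will be the \emph{leapfrog property}, which the greedy rule enforces essentially by fiat: for suitable constants depending on $\eps$, whenever $e_0 = (u,v)\in E$ and edges $e_1,\dots,e_k \in E\setminus\{e_0\}$ together with short connector segments form a closed walk through $u$ and $v$, the total length of $e_1,\dots,e_k$ exceeds $(1+\eps)\|u-v\|_2$ by a definite margin --- informally, $E$ contains no near-shortcut for any of its own edges. From this I would first derive a degree bound: at a vertex, two edges of comparable length pointing in almost the same direction would leapfrog each other, so a packing argument on directions in $\mathbb{R}^d$ caps the degree at $\eps^{-O(d)}$.

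The hard part is the purely geometric lemma that every graph in $\mathbb{R}^d$ of bounded degree satisfying the leapfrog property has weight $\eps^{-O(d)}\cdot w_2(\MST_2(S))$. Here I would group the edges of $E$ into length classes (lengths in $[2^j,2^{j+1})$) and charge the edges of each class against a minimum spanning tree: the leapfrog property prevents too much total length of any one class from being concentrated in a small region --- otherwise a cheap detour would exist --- and combining this with the degree bound and a volume/packing estimate in $\mathbb{R}^d$ bounds each class's contribution, and then the sum over classes, by a constant-in-$n$ multiple of $w_2(\MST_2(S))$, with no surviving $\log n$ factor and all $\eps$- and $d$-dependence absorbed into the $\eps^{-O(d)}$ term. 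Assembling the $(1+\eps)$-stretch guarantee, the degree bound, and this weight lemma yields the theorem; controlling the interaction between length classes in the weight lemma is the step I expect to be the main obstacle.
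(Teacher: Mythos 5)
First, a point of comparison: the paper does not prove Theorem~\ref{thm:l2} at all --- it is stated verbatim as the light Euclidean spanner theorem of \cite{DHN93} and used as a black box. Your proposal follows the same route as that cited literature: build the greedy spanner, observe the $(1+\eps)$-stretch is immediate, establish the leapfrog property and a degree bound of $\eps^{-O(d)}$ via a direction-packing argument, and then invoke the geometric lemma that bounded-degree leapfrog graphs in $\mathbb{R}^d$ have weight $\eps^{-O(d)}\cdot w_2(\MST_2(S))$. The stretch argument and the degree bound are standard and fine as sketched.

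The genuine gap is exactly where the paper itself warns the difficulty lies (``the existence proof for light Euclidean spanners is complex; at its core, it relies on the leapfrog property''): the weight lemma. Your treatment of it --- group edges into length classes $[2^j,2^{j+1})$, argue the leapfrog property prevents concentration of any one class in a small region, and combine with packing bounds so that ``the sum over classes'' is a constant-in-$n$ multiple of $w_2(\MST_2(S))$ --- restates the desired conclusion rather than proving it. Per-class charging of the kind you describe is precisely what yields the easy but weaker bound: each length class can be charged $O_{\eps,d}(1)\cdot w_2(\MST_2(S))$, and since there may be $\Theta(\log\Delta)$ (unboundedly many, in terms of $n$) nonempty classes, the naive sum gives a $\log$ factor, which is exactly what Theorem~\ref{thm:l2} must avoid. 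Eliminating it requires a genuinely global argument across scales (this is the long, technical heart of \cite{DHN93} and of the book-length leapfrog theorem of Narasimhan--Smid), and you explicitly defer it (``the step I expect to be the main obstacle''). A secondary, smaller gap: the claim that the greedy rule enforces the leapfrog property ``essentially by fiat'' is too quick --- greediness directly rules out a short path consisting of \emph{previously inserted} (hence no longer) edges, whereas the leapfrog inequality must hold for arbitrary subsets of spanner edges plus connector segments, and deriving it takes an actual (if routine) argument. So: right approach, matching the source the paper cites, but the core lemma that carries the theorem is asserted, not proven.
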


\subsection{Properties of Euclidean vectors}\label{sec:prop}

In Section \ref{sec:first-main} we will require a property of Euclidean vectors
detailed in Lemma \ref{lem:ineq} below. We begin with the following simple fact:

\begin{lemma}\label{lem:pre}
For $0 \le \epsilon_0 \le \epsilon \le \epsilon_1$, $\epsilon \le \frac{1}{4}$,
and non-negative $a,b$, if
$\epsilon_1 a + \eps_0 b \le \epsilon(a+b)$
then
$\epsilon_1 a + \sqrt{\epsilon_0} b \le \sqrt{\eps}(a+b)$.
\end{lemma}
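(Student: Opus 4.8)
The idea is to make both inequalities homogeneous linear in $a$ and $b$ and then compare coefficients. Rewrite the hypothesis $\epsilon_1 a + \epsilon_0 b \le \epsilon(a+b)$ as
\[
(\epsilon_1 - \epsilon)\,a \;\le\; (\epsilon - \epsilon_0)\,b,
\]
and rewrite the target inequality $\epsilon_1 a + \sqrt{\epsilon_0}\,b \le \sqrt{\epsilon}(a+b)$ as
\[
(\epsilon_1 - \sqrt{\epsilon})\,a \;\le\; (\sqrt{\epsilon} - \sqrt{\epsilon_0})\,b .
\]
Note that in both lines the coefficient of $b$ is non-negative: $\epsilon_0 \le \epsilon$ gives $\sqrt{\epsilon_0}\le\sqrt{\epsilon}$, and of course $\epsilon_1\ge\epsilon$.

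\textbf{Case split.} Split on the sign of $\epsilon_1 - \sqrt{\epsilon}$. If $\epsilon_1 \le \sqrt{\epsilon}$, the left-hand side of the target is $\le 0$ while the right-hand side is $\ge 0$, so we are done without even invoking the hypothesis (this subsumes the degenerate possibilities $\epsilon_1=\epsilon$ and similar). So assume $\epsilon_1 > \sqrt{\epsilon}$; since $\epsilon\le\tfrac14\le 1$ we have $\sqrt{\epsilon}\ge\epsilon$, hence $\epsilon_1 - \epsilon > 0$. If moreover $\epsilon_0 = \epsilon$, the rearranged hypothesis forces $a = 0$ and the target becomes $\sqrt{\epsilon}\,b \le \sqrt{\epsilon}\,b$, which holds; so assume also $\epsilon_0 < \epsilon$.

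\textbf{Main case: coefficient comparison.} In this case it suffices to prove
\[
\frac{\epsilon_1 - \sqrt{\epsilon}}{\epsilon_1 - \epsilon} \;\le\; \frac{\sqrt{\epsilon} - \sqrt{\epsilon_0}}{\epsilon - \epsilon_0} \;=\; \frac{1}{\sqrt{\epsilon} + \sqrt{\epsilon_0}},
\]
because multiplying the rearranged hypothesis by the positive ratio on the left and then using this bound (together with $(\epsilon-\epsilon_0)b\ge 0$) yields exactly the rearranged target. Clearing the (positive) denominators, the displayed inequality is equivalent to $(\epsilon_1 - \sqrt{\epsilon})(\sqrt{\epsilon} + \sqrt{\epsilon_0}) \le \epsilon_1 - \epsilon$; expanding and cancelling $\epsilon$ and $\epsilon_1\sqrt{\epsilon}+\epsilon_1\sqrt{\epsilon_0}$ appropriately reduces it to
\[
\epsilon_1\bigl(\sqrt{\epsilon} + \sqrt{\epsilon_0} - 1\bigr) \;\le\; \sqrt{\epsilon\,\epsilon_0}.
\]
This is where $\epsilon \le \tfrac14$ is used: then $\sqrt{\epsilon}\le\tfrac12$ and, since $\epsilon_0\le\epsilon$, also $\sqrt{\epsilon_0}\le\tfrac12$, so $\sqrt{\epsilon}+\sqrt{\epsilon_0}-1\le 0$; as $\epsilon_1\ge 0$ the left side is $\le 0\le\sqrt{\epsilon\epsilon_0}$, and we are done.

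\textbf{Main obstacle.} Every computation here is elementary; the only thing that needs care is the sign bookkeeping — keeping track of which of $\epsilon_1-\epsilon$, $\epsilon_1-\sqrt{\epsilon}$, $\epsilon-\epsilon_0$, $\sqrt{\epsilon}-\sqrt{\epsilon_0}$ are guaranteed non-negative (or positive), so that the rearrangements into ``(coeff of $a$)$\cdot a \le$ (coeff of $b$)$\cdot b$'' and the multiplication by the coefficient ratio are all legitimate. Once the case split on $\epsilon_1$ versus $\sqrt{\epsilon}$ is in place, the rest is forced.
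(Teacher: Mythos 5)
Your proof is correct and follows essentially the same route as the paper's: both rearrange the hypothesis and target to $(\epsilon_1-\epsilon)a \le (\epsilon-\epsilon_0)b$ and $(\epsilon_1-\sqrt{\epsilon})a \le (\sqrt{\epsilon}-\sqrt{\epsilon_0})b$, and both reduce the implication to the key inequality $(\epsilon_1-\sqrt{\epsilon})(\sqrt{\epsilon}+\sqrt{\epsilon_0}) \le \epsilon_1-\epsilon$, which follows from $\sqrt{\epsilon}+\sqrt{\epsilon_0}\le 1$ (the paper phrases this as dividing the hypothesis by $\sqrt{\epsilon}+\sqrt{\epsilon_0}$, you phrase it as comparing coefficient ratios -- the same computation). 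Your explicit case split on $\epsilon_1 \le \sqrt{\epsilon}$ and $\epsilon_0 = \epsilon$ merely handles the degenerate sign cases that the paper glosses over.
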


\begin{proof}
Rewriting both inequalities, we claim that
$(\eps_1 - \eps) a \le (\epsilon - \epsilon_0) b$
implies that 
$(\eps_1 - \sqrt{\epsilon}) a \le (\sqrt{\epsilon} - \sqrt{\epsilon_0}) b.$
This claim is confirmed by dividing both sides of the first inequality by 
$\sqrt{\eps} + \sqrt{\eps_0}$. The division immediately yields the 
right hand side of the second inequality; the left hand side follows
by noting that
$\sqrt{\eps} + \sqrt{\eps_0} \le \frac{1}{2} +  \frac{1}{2} = 1$, 
and then observing that
$\frac{\eps_1 - \eps}{\sqrt{\eps} + \sqrt{\eps_0}}
> \frac{\eps_1(\sqrt{\eps} + \sqrt{\eps_0}) 
	- \sqrt{\eps}(\sqrt{\eps} + \sqrt{\eps_0})}{\sqrt{\eps} + \sqrt{\eps_0}}
= \eps_1- \sqrt{\eps}$.
\QED
\end{proof}

\begin{lemma}\label{lem:ineq}
Let $V$ be a set of $d$-dimensional vectors. Given some vector $w$, let 
each vector $v_i \in V$ be decomposed into
two orthogonal vectors $v_i^\perp, v_i^\parallel$, where 
$v_i^\perp + v_i^\parallel = v$
and $v_i^\perp$ is orthogonal to $w$. 
For $0< \epsilon \le \frac{1}{4}$, if

$$\sum_{v \in V} \| v \|_2 = (1+\epsilon) \| \sum_{v \in V} v^\parallel \|_2,$$

then

$$ \sum_{v \in V} \| v^\parallel \|_2 \le (1+\epsilon) \| \sum_{v \in V} v^\parallel \|_2$$

and 

$$\sum_{v \in V} \| v^\perp \|_2 \le 3(1+\epsilon)\sqrt{\epsilon} \| \sum_{v \in V} v^\parallel \|_2.$$
\end{lemma}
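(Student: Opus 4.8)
The plan is to reduce everything to two elementary facts: a per-vector Pythagorean identity and the triangle inequality applied to the sum $\sum_{v} v^\parallel$, after which the perpendicular bound falls out of a single application of Cauchy--Schwarz. Throughout, read the hypothesis as $v_i^\perp + v_i^\parallel = v_i$ (an orthogonal decomposition of each $v_i$), and write $L := \|\sum_{v\in V} v^\parallel\|_2$, so that the given condition is $\sum_{v\in V}\|v\|_2 = (1+\epsilon)L$. The first conclusion is then immediate: since $v=v^\perp+v^\parallel$ is orthogonal, $\|v^\parallel\|_2 \le \|v\|_2$ for each $v$, and summing over $V$ gives $\sum_v \|v^\parallel\|_2 \le \sum_v \|v\|_2 = (1+\epsilon)L$.

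For the second conclusion, the quantitative input is that the total "excess length" is small. By the triangle inequality, $\sum_v \|v^\parallel\|_2 \ge \|\sum_v v^\parallel\|_2 = L$. Setting $\delta_v := \|v\|_2 - \|v^\parallel\|_2 \ge 0$ and subtracting from the hypothesis,
$$\sum_{v\in V}\delta_v \;=\; \sum_{v\in V}\|v\|_2 - \sum_{v\in V}\|v^\parallel\|_2 \;\le\; (1+\epsilon)L - L \;=\; \epsilon L.$$
This is the only place the factor $(1+\epsilon)$ in the hypothesis is really used.

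Now factor the perpendicular component via the per-vector Pythagorean identity:
$$\|v^\perp\|_2^2 \;=\; \|v\|_2^2 - \|v^\parallel\|_2^2 \;=\; \bigl(\|v\|_2 - \|v^\parallel\|_2\bigr)\bigl(\|v\|_2 + \|v^\parallel\|_2\bigr) \;=\; \delta_v\bigl(\delta_v + 2\|v^\parallel\|_2\bigr).$$
Applying Cauchy--Schwarz with the two factors $\sqrt{\delta_v}$ and $\sqrt{\delta_v + 2\|v^\parallel\|_2}$,
$$\sum_{v\in V}\|v^\perp\|_2 \;=\; \sum_{v\in V}\sqrt{\delta_v}\,\sqrt{\delta_v + 2\|v^\parallel\|_2}$$
$$\le\; \Bigl(\sum_{v}\delta_v\Bigr)^{1/2}\Bigl(\sum_{v}\delta_v + 2\sum_{v}\|v^\parallel\|_2\Bigr)^{1/2} \;\le\; (\epsilon L)^{1/2}\bigl(\epsilon L + 2(1+\epsilon)L\bigr)^{1/2} \;=\; \sqrt{2+3\epsilon}\cdot\sqrt{\epsilon}\cdot L,$$
using the two bounds $\sum_v\delta_v\le\epsilon L$ and $\sum_v\|v^\parallel\|_2\le(1+\epsilon)L$ established above. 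Since $\epsilon\le\frac14$ we have $\sqrt{2+3\epsilon}\le 3\le 3(1+\epsilon)$, which gives the claimed inequality.

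I do not anticipate a genuine obstacle: the one idea needed is the factorization $\|v^\perp\|_2^2=\delta_v(\delta_v+2\|v^\parallel\|_2)$ combined with the observation that $\sum_v\delta_v$ is controlled by the $(1+\epsilon)$ slack; the rest is routine. The only delicate point is matching the stated constant $3(1+\epsilon)$ — a cruder estimate (e.g.\ bounding $\sqrt{p^2+q^2}-p$ by $q/\sqrt{2}$ directly) would leak a constant factor — but the Cauchy--Schwarz route above already lands comfortably below $3$. As an alternative one could replace the final step by an appeal to Lemma \ref{lem:pre} to convert an $\epsilon$-type bound on $\sum_v \|v^\perp\|_2^2/\|v^\parallel\|_2$ into a $\sqrt{\epsilon}$-type bound on $\sum_v\|v^\perp\|_2$, handling the vectors with $\|v^\parallel\|_2=0$ separately; but the Cauchy--Schwarz computation appears to be the shortest path.
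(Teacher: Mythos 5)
Your proof is correct, and it takes a genuinely different route from the paper's. The paper's argument first reduces, by a segmenting step, to the case where all parallel components $v_i^\parallel$ have equal length, then introduces the ratios $a_i = \|v_i^\perp\|_2/\|v_i^\parallel\|_2$, splits them into the classes $a_i \ge 1$ and $a_i < 1$, uses the lower bounds $\sqrt{1+a_i^2} > 1+\tfrac{a_i}{3}$ and $\sqrt{1+a_i^2} > 1+\tfrac{a_i^2}{3}$ respectively, and finally converts the resulting $\epsilon$-type bound into a $\sqrt{\epsilon}$-type bound by invoking the auxiliary Lemma \ref{lem:pre}. You instead set $\delta_v = \|v\|_2 - \|v^\parallel\|_2$, note $\sum_v \delta_v \le \epsilon L$ from the hypothesis and the triangle inequality, exploit the exact factorization $\|v^\perp\|_2^2 = \delta_v(\delta_v + 2\|v^\parallel\|_2)$, and finish with a single application of Cauchy--Schwarz. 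This bypasses the equal-length reduction (which the paper only sketches), the case split, and Lemma \ref{lem:pre} entirely, and it even lands at the sharper constant $\sqrt{2+3\epsilon}\,\sqrt{\epsilon}\,L \le 3(1+\epsilon)\sqrt{\epsilon}\,L$ for $\epsilon \le \tfrac14$. All steps check: $\delta_v \ge 0$ by orthogonality, the Cauchy--Schwarz factors $\sqrt{\delta_v}$ and $\sqrt{\delta_v + 2\|v^\parallel\|_2}$ are well defined with no degenerate cases, and the hypothesis is used only as an upper bound, which suffices. Your argument is shorter and self-contained; the paper's route buys nothing extra here beyond motivating Lemma \ref{lem:pre}, which your approach renders unnecessary for this lemma.
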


\begin{proof}
The first part of the Lemma is trivial: Since the vectors are orthogonal,
$\sum_{v \in V} \| v^\parallel \|_2
\le \sum_{v \in V} \sqrt{\| v^\parallel \|_2^2 + \| v^\perp \|_2^2}
= \sum_{v \in V} \| v \|_2 
= (1+\epsilon) \| \sum_{v \in V} v^\parallel \|_2$.

Moving to the second part, note that we may assume without loss of generality that all
$v_i^\parallel$ have the same length: We can always enforce this property by 
segmenting the $v_i$'s into small vectors with equal parallel contribution 
without violating the conditions of the lemma.

Define set $A$, where for each element $a_i \in A$, 
$a_i = \frac{\| v_i^\perp \|_2}{\| v_i^\parallel \|_2}$.
We have
$\| v_i \|_2 
= \sqrt{\| v_i^\parallel \|_2^2 + \| v_i^\perp \|_2^2 }
= \sqrt{1+a_i^2} \| v_i^\parallel \|_2$.
We wish to bound 
$\sum_{v_i \in V} \| v_i^\perp \|_2 
= \sum_{v_i \in V} a_i \| v_i^\parallel \|_2$.
Partition $A$ into two subsets 
$A_0,A_1 \subset A$, where $A_0$ contains elements of value less than 1, and $A_1$ 
contains elements of value greater or equal to 1. 
Likewise, partition $V$ into two subsets $V_0,V_1 \subset V$, where
$v_i \in V_j$ if and only if $a_i \in A_j$. We have

$$\sum_{v_i \in V_1} \| v_i \|_2 
= \sum_{v_i \in V_1} \sqrt{1+a_i^2}  \| v_i^\parallel \|_2
> \sum_{v_i \in V_1} (1+\frac{a_i}{3}) \| v_i^\parallel \|_2,$$

$$\sum_{v_i \in V_0} \| v_i \|_2
= \sum_{v_i \in V_0} \sqrt{1+a_i^2} \| v^\parallel \|_2	
> \sum_{v_i \in V_0} (1+\frac{a_i^2}{3}) \| v^\parallel \|_2.$$

By the assumption of the lemma, 

\smallskip
$\begin{array}{lcl}
(1+\epsilon) \| \sum_{v \in V} v^\parallel \|_2
&=& \sum_{v_i \in V} \| v_i \|_2							\\
&=& \sum_{v_i \in V_1} \| v_i \|_2 + \sum_{v_i \in V_0} \| v_i \|_2       \\
&>& \sum_{v_i \in V_1} (1+\frac{a_i}{3}) \| v_i^\parallel \|_2
        + \sum_{v_i \in V_0} (1+\frac{a_i^2}{3}) \| v_i^\parallel \|_2              \\
&=& \sum_{v_i \in V} \| v_i^\parallel \|_2
        + \sum_{v_i \in V_1} \frac{a_i}{3} \| v_i^\parallel \|_2
        + \sum_{v_i \in V_0} \frac{a_i^2}{3} \| v_i^\parallel \|_2			\\
&\ge& \| \sum_{v_i \in V} v_i^\parallel \|_2
        + \sum_{v_i \in V_1} \frac{a_i}{3} \| v_i^\parallel \|_2
        + \sum_{v_i \in V_0} \frac{a_i^2}{3} \| v_i^\parallel \|_2.             \\
\end{array}$
\smallskip

The final inequality follows from the triangle inequality.
It follows immediately that 
$\sum_{v_i \in V_1} \frac{a_i}{3} \| v_i^\parallel \|_2
+ \sum_{v_i \in V_0} \frac{a_i^2}{3} \| v_i^\parallel \|_2
\le \epsilon \| \sum_{v \in V} v_i^\parallel \|_2.$
Recall that we may assume that all $\| v_i^\parallel \|_2$ are equal, so 
$\sum_{a_i \in A_1} \frac{a_i}{3} 
+ \sum_{a_i \in A_0} \frac{a_i^2}{3}
\le \epsilon |A| 
= \epsilon |V|$.
Set
$\sum_{ a_i \in A_1} \frac{a_i}{3} = \epsilon_1 |V_1|$
and
$\sum_{ a_i \in A_0} \frac{a_i^2}{3} = \epsilon_0 |V_0|$,
so that
$ \epsilon_1 |V_1| + \epsilon_0 |V_0| \le \epsilon |V| $
and it must be that 
$\epsilon_1 \ge \frac{1}{3} > \epsilon$ 
and 
$\epsilon_0 \le \epsilon$. 
If we fix the term $\eps_0$ and take the elements of $A_0$ as variable, we see that
$ \sum_{a_i \in A_0} a_i $
attains its maximum value when $a_i = \sqrt{3\epsilon_0}$ for all $a_i \in A_0$. So
$ \sum_{a_i \in A_0} a_i \le \sqrt{3\epsilon_0} |V_0|$, and using Lemma \ref{lem:pre} we have that 

\smallskip

$\begin{array}{lcl}
\sum_{v \in V} \| v^\perp \|_2
&=& \sum_{v_i \in V_1} a_i \| v_i^\parallel \|_2
  + \sum_{v_i \in V_0} a_i \| v_i^\parallel \|_2		\\
&\le& [3\epsilon_1 |V_1| + \sqrt{3 \eps_0} |V_0|] \| v_0^\parallel \|_2	\\
&\le& 3[\epsilon_1 |V_1|+\sqrt{\epsilon_0} |V_0|] \| v_0^\parallel \|_2	\\
&\le& 3\sqrt{\epsilon} |V| \| v_0^\parallel \|_2			\\
&\le& 3(1+\epsilon)\sqrt{\epsilon} \| \sum_{v_i \in V} v_i^\parallel \|_2.
\end{array}$

\smallskip

\QED
\end{proof}

\subsection{Light $\ell_p$ and snowflake spanners}\label{sec:first-main}

Here we will show that $\ell_p$ spaces -- and therefore snowflake
metrics -- admit light spanners.
Observe that for any $d$-dimensional vector $x$, 
when $p \ge 2$ we have 
$\| x \|_p \le \| x \|_2 \le d^{\frac{1}{2}-\frac{1}{p}} \| x \|_p$,
and when $p \le 2$ we have
$d^{\frac{1}{2} - \frac{1}{p}} \| x \|_p \le \| x \|_2 \le \| x \|_p$.
We can prove the following lemma:

\begin{lemma}\label{lem:lp}
Let $E$ be the edge set of a $(1+\eps)$-spanner for $S$ under $\ell_2$, with weight
$w_2(E) = c \cdot w_2(\MST_2(S))$ for some constant $c := c(d,\eps)$. 
Let $d' = \max \{ d^{\frac{1}{2} - \frac{1}{p}}, d^{\frac{1}{p} - \frac{1}{2}}\}$.
Then set $E$ forms a $(1+\epsilon) (1+3d'\sqrt{\eps})$-spanner for $S$ under $\ell_p$, 
$1 \le p \le \infty$ with weight 
$w_p(E) \le c d' \cdot w_p(\MST_p(S))$. 
\end{lemma}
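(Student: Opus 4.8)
I will verify the two assertions — the $\ell_p$ stretch bound and the $\ell_p$ weight bound — separately, both resting on the elementary norm comparisons recalled just above the lemma. The observation to keep in mind is that for a $d$-dimensional vector $x$ one always has $\|x\|_2 \le d'\|x\|_p$ and $\|x\|_p \le d'\|x\|_2$, but in \emph{each} regime one of these is free (holds with $d'$ replaced by $1$): $\|x\|_p \le \|x\|_2$ when $p \ge 2$, and $\|x\|_2 \le \|x\|_p$ when $p \le 2$. Hence whenever an estimate forces a trip from $\ell_2$ to $\ell_p$ and back, exactly one of the two trips is free, so the net cost is a single $d'$ rather than $d'^2$.

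\textbf{Stretch.} Fix $u,v \in S$. Since $E$ is a $(1+\eps)$-spanner under $\ell_2$, there is a path $u = x_0, x_1, \dots, x_k = v$ in $E$ with $\sum_i \|x_i - x_{i-1}\|_2 \le (1+\eps)\|u-v\|_2$. Put $w := v-u$ and $v_i := x_i - x_{i-1}$, so $\sum_i v_i = w$, and decompose each $v_i = v_i^\parallel + v_i^\perp$ relative to $w$ as in Lemma~\ref{lem:ineq}; then $\sum_i v_i^\parallel = w$ (it is the projection of $w$ onto its own span), so $\|\sum_i v_i^\parallel\|_2 = \|u-v\|_2$. As the path length lies in $[\|u-v\|_2,(1+\eps)\|u-v\|_2]$, there is $\eps' \in [0,\eps]$ with $\sum_i\|v_i\|_2 = (1+\eps')\|u-v\|_2$; applying Lemma~\ref{lem:ineq} with parameter $\eps'$ (the case $\eps'=0$ being trivial, all $v_i^\perp$ vanishing) and using monotonicity of $t \mapsto 3(1+t)\sqrt t$ gives $\sum_i\|v_i^\perp\|_2 \le 3(1+\eps)\sqrt{\eps}\,\|u-v\|_2$. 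Now bound $\sum_i\|v_i\|_p \le \sum_i\|v_i^\parallel\|_p + \sum_i\|v_i^\perp\|_p$. Each $v_i^\parallel$ is a multiple of $w$, so the first sum equals $\frac{\|u-v\|_p}{\|u-v\|_2}\sum_i\|v_i^\parallel\|_2 \le (1+\eps)\|u-v\|_p$, using $\sum_i\|v_i^\parallel\|_2 \le \sum_i\|v_i\|_2$. For the second sum I case on $p$: if $p \ge 2$ then $\sum_i\|v_i^\perp\|_p \le \sum_i\|v_i^\perp\|_2 \le 3(1+\eps)\sqrt\eps\|u-v\|_2 \le 3d'(1+\eps)\sqrt\eps\|u-v\|_p$; if $p \le 2$ then $\sum_i\|v_i^\perp\|_p \le d'\sum_i\|v_i^\perp\|_2 \le 3d'(1+\eps)\sqrt\eps\|u-v\|_2 \le 3d'(1+\eps)\sqrt\eps\|u-v\|_p$. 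Summing yields path length $\le (1+\eps)(1+3d'\sqrt\eps)\|u-v\|_p$, the claimed stretch.

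\textbf{Weight.} If $p \ge 2$, apply $\|e\|_p \le \|e\|_2$ edgewise to get $w_p(E) \le w_2(E) = c\,w_2(\MST_2(S))$; then minimality of $\MST_2(S)$ under $\ell_2$ gives $w_2(\MST_2(S)) \le w_2(\MST_p(S)) = \sum_{e\in\MST_p(S)}\|e\|_2 \le d'\sum_{e\in\MST_p(S)}\|e\|_p = d'\,w_p(\MST_p(S))$. If $p \le 2$, the free direction switches: $w_p(E) \le d'\,w_2(E) = cd'\,w_2(\MST_2(S)) \le cd'\,w_2(\MST_p(S)) \le cd'\sum_{e\in\MST_p(S)}\|e\|_p = cd'\,w_p(\MST_p(S))$. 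Either way $w_p(E) \le cd'\,w_p(\MST_p(S))$.

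\textbf{Where the difficulty lies.} The weight bound and the reduction of the spanner's $\le$ to the equality that Lemma~\ref{lem:ineq} wants are routine. The crux is the perpendicular-component estimate in the stretch proof: naively, converting $\|v_i^\perp\|_p \to \|v_i^\perp\|_2$ and then $\|u-v\|_2 \to \|u-v\|_p$ looks like it costs $d'^2$, which would render the bound useless since $d'$ can be polynomially large; the point is that Lemma~\ref{lem:ineq} has already harvested the orthogonality savings (the $\sqrt\eps$ factor), and splitting on $p \ge 2$ versus $p \le 2$ makes one of the two conversions free, so that $\sqrt\eps$ is multiplied by only a single $d'$. I would also sanity-check the cases $u = v$, $p = \infty$ (where $d' = \sqrt d$), and confirm that $\eps \le \frac{1}{4}$ is assumed throughout so that Lemma~\ref{lem:ineq} applies.
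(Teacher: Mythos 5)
Your proof is correct and takes essentially the same route as the paper's: the identical two-case ($p\ge 2$ versus $p\le 2$) norm-comparison chain for the weight bound, and the same application of Lemma~\ref{lem:ineq} to the parallel/perpendicular decomposition along the spanner path, paying only a single factor $d'$ on the perpendicular part. Your explicit reduction of the spanner's inequality to the equality hypothesis of Lemma~\ref{lem:ineq} (choosing $\eps'\in[0,\eps]$ and using monotonicity, with the $\eps'=0$ case handled separately) is a point of care that the paper's write-up glosses over, but it is the same argument.
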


\begin{proof}
We first prove the weight guarantee. Recall that $\MST_2(S)$ is the minimum weight
connected graph under $\ell_2$. When $p \ge 2$ we have

\smallskip

$
\begin{array}{lcl}
w_p (E) 
& \le & w_2(E) 				\\
& = &	c \cdot w_2(\MST_2(S)) 		\\
& \le & c \cdot w_2(\MST_p(S))		\\
& \le & c d^{\frac{1}{2} - \frac{1}{p}} \cdot w_p(\MST_p(S)).
\end{array}
$

\smallskip

When $p \le 2$, we have 

\smallskip

$
\begin{array}{lcl}
w_p (E) 
& \le & d^{\frac{1}{p} - \frac{1}{2}} \cdot w_2(E) 			\\
& = & 	c d^{\frac{1}{p} - \frac{1}{2}} \cdot w_2(\MST_2(S)) 	\\
& \le &	c d^{\frac{1}{p} - \frac{1}{2}} \cdot w_2(\MST_p(S))		\\
& \le &	c d^{\frac{1}{p} - \frac{1}{2}} \cdot w_p(\MST_p(S)).
\end{array}
$

\bigskip

This completes the proof of lightness, and we proceed with the stretch guarantee. 
Consider any vertex pair $x_0,x_t \in S$, connected 
in $E$ by the minimal (under $\ell_2$) edge path 
$P_{0,t} = \{ (x_0,x_1),\ldots,(x_{t-1},x_t) \}$. 
Let $V$ be a set of vectors $v_i$ transitioning $x_i$ to $x_{i+1}$, that
is $v_i = x_{i+1} - x_i$. We will employ Lemma \ref{lem:ineq} with respect to $V$ and 
vector $w = x_t - x_0$. For parallel vectors, we have that 
$\frac{\sum_{v \in V} \| v^\parallel \|_2}{\sum_{v \in V} \| v^\parallel \|_p}
= \frac{\| \sum_{v \in V} v^\parallel \|_2}{\| \sum_{v \in V} v^\parallel \|_p}
$.
Since $E$ is a $(1+\eps)$-spanner, Lemma \ref{lem:ineq} gives 
$ \sum_{v \in V} \| v^\parallel \|_2 
\le (1+\epsilon) \| \sum_{v \in V} v^\parallel \|_2$
and so
$ \sum_{v \in V} \| v^\parallel \|_p
\le (1+\epsilon) \| \sum_{v \in V} v^\parallel \|_p$.
Lemma \ref{lem:ineq} also gives
$\sum_{v \in V} \| v^\perp \|_2 \le 3(1+\epsilon)\sqrt{\epsilon} \| \sum_{v \in V} v^\parallel \|_2$:
When $p \ge 2$, we have
$\sum_{v \in V} \| v^\perp \|_p 
\le \sum_{v \in V} \| v^\perp \|_2 
\le 3(1+\epsilon)\sqrt{\epsilon} \| \sum_{v \in V} v^\parallel \|_2
\le 3d'(1+\epsilon)\sqrt{\epsilon} \| \sum_{v \in V} v^\parallel \|_p$, 
and when $p < 2$ we have
$\sum_{v \in V} \| v^\perp \|_p 
\le d' \sum_{v \in V} \| v^\perp \|_2 
\le 3d'(1+\epsilon)\sqrt{\epsilon} \| \sum_{v \in V} v^\parallel \|_2
\le 3d'(1+\epsilon)\sqrt{\epsilon} \| \sum_{v \in V} v^\parallel \|_p$.
We conclude that

\smallskip

$
\begin{array}{lcl}
w_p(E) 
&=& \sum_{i=0}^{t-1} \| v_{i} \|_p		\\
&\le& \sum_{i=0}^{t-1} [ \| v^\parallel_{i} \|_p + \| v^\perp_{i} \|_p ]	\\
&\le& (1+\eps) \| w \|_p + (3 d' (1+\epsilon) \sqrt{\eps}) \| w \|_p		\\
&=& (1+\epsilon) (1+3d'\sqrt{\eps}) \| w \|_p. 
\end{array}
$
\\
\QED
\end{proof}

It follows that a $(1+\eps)$-spanner for $(S,\ell_p)$ with lightness $\epsilon^{-\tilde{O}(d)}$
can be achieved by building a $(1+O((\eps/d')^2))$-spanner for $(S,\ell_2)$ of lightness
$\epsilon^{-O(d)}$, as in Theorem \ref{thm:l2}.

\paragraph{Remark.} 
Lemma \ref{lem:lp} shows that the stretch guarantee of a Euclidean spanner holds
even if the metric is later changed to a different $\ell_p$. While this claim is true
for Euclidean spanners, it does not hold for all $\ell_p$ spanners, for example 
$\ell_\infty$: Consider three vectors 
$v_1 = (0,0)$, $v_2 = (1,1)$, $v_3 = (2,0)$, and the two edges
$\{v_1,v_2\}$ and $\{v_2,v_3\}$. Then this spanner has no distortion
for $(S,\ell_\infty)$, but constant distortion for $(S,\ell_p)$ and fixed
$p$.

Given a metric $M=(X,\delta)$ with dimension $\ddim(M)$, the metric 
$(X,\delta^\alpha)$ has doubling dimension $\frac{\ddim(M)}{\alpha}$ \cite{GK-11}.
Har-Peled and Mendel \cite{HM-06} demonstrated that $\frac{1}{2}$-snowflake metrics embed into 
$\ell_\infty$ with low distortion and dimension, and their result can be extended to show that
$(X,\delta^\alpha)$ embeds into $d$-dimensional $\ell_\infty$ with distortion $1+\eps$ and target dimension 
$d = \frac{\eps^{-O(\ddim(M)/\alpha)}}{1-\alpha}$ \cite{GK-11}. 
Together with Theorem \ref{thm:l2} and Lemma \ref{lem:lp}, we may conclude:

\begin{theorem}\label{thm:first-main}
Let $M=(X,\delta)$ be an $n$-point doubling metric with doubling dimension $d$.
For any $0 < \alpha < 1$ and $\eps > 0$,
there exists a $(1+\eps)$-spanner for $(X,\delta^\alpha)$ with lightness 
$2^{\frac{1}{1-\alpha} \eps^{-\tilde{O}(\ddim(M)/\alpha)}}$.
\end{theorem}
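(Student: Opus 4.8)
The plan is to reduce the theorem to the Euclidean light spanner of Theorem~\ref{thm:l2} together with the $\ell_p$ transfer principle of Lemma~\ref{lem:lp}, using the known embedding of snowflakes into low-dimensional $\ell_\infty$. Since the statement is monotone in $\eps$ we may assume $\eps\le\tfrac14$. First I would embed $(X,\delta^\alpha)$ into $d$-dimensional $\ell_\infty$ with distortion $1+\eps/4$ and $d=\frac{\eps^{-O(\ddim(M)/\alpha)}}{1-\alpha}$, as furnished by Har-Peled and Mendel \cite{HM-06} in the form extended by \cite{GK-11}; write $\varphi\colon X\to\mathbb{R}^d$ for the embedding, normalized so that $\delta^\alpha(x,y)\le\|\varphi(x)-\varphi(y)\|_\infty\le(1+\eps/4)\,\delta^\alpha(x,y)$, and put $S=\varphi(X)$. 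Next, viewing $S$ as a $d$-dimensional point set of size $n$, I would apply Theorem~\ref{thm:l2} with a stretch parameter $\eps'=\Theta(\eps^2/d)$ chosen small enough that Lemma~\ref{lem:lp} with $p=\infty$ (so $d'=d^{1/2}$) turns the resulting $\ell_2$-spanner $E$ into a spanner for $(S,\ell_\infty)$ of stretch $(1+\eps')(1+3\sqrt{d}\sqrt{\eps'})\le 1+\eps/4$ and weight $w_\infty(E)\le(\eps')^{-O(d)}\cdot\sqrt{d}\cdot w_\infty(\MST_\infty(S))$. This is exactly the recipe recorded after Lemma~\ref{lem:lp}, specialized to $p=\infty$.

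I would then pull $E$ back through $\varphi$ to an edge set $E'$ on $X$. The lower bound $\delta^\alpha(\cdot,\cdot)\le\|\varphi(\cdot)-\varphi(\cdot)\|_\infty$ shows that along any $E$-path the total $\delta^\alpha$-length of the pulled-back edges is at most the $\ell_\infty$-length of the path, so $E'$ has stretch $(1+\eps/4)^2\le 1+\eps$ in $(X,\delta^\alpha)$ and weight $w(E')\le w_\infty(E)$. The upper bound $\|\varphi(\cdot)-\varphi(\cdot)\|_\infty\le(1+\eps/4)\,\delta^\alpha(\cdot,\cdot)$, applied to the edges of $\MST(X,\delta^\alpha)$, gives $w_\infty(\MST_\infty(S))\le(1+\eps/4)\,w(\MST(X,\delta^\alpha))$. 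Chaining the three inequalities, the lightness of $E'$ is at most $(1+\eps/4)\,(\eps')^{-O(d)}\,\sqrt{d}$.

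It remains to substitute $\eps'=\Theta(\eps^2/d)$ and $d=\frac{\eps^{-O(\ddim(M)/\alpha)}}{1-\alpha}$ and simplify. Here $(\eps')^{-O(d)}\sqrt{d}=\eps^{-O(d)}d^{O(d)}=2^{O(d\log d+d\log(1/\eps))}=2^{\tilde{O}(d)}$, and feeding in $d=\frac{\eps^{-O(\ddim(M)/\alpha)}}{1-\alpha}$, whose logarithm is $O\!\left(\tfrac{\ddim(M)}{\alpha}\log\tfrac1\eps+\log\tfrac1{1-\alpha}\right)$, collapses the bound to $2^{\frac{1}{1-\alpha}\,\eps^{-\tilde{O}(\ddim(M)/\alpha)}}$, the extra $\ddim(M)/\alpha$ and $\log(1/\eps)$ factors produced by $\log d$ being absorbed into the $\tilde{O}$ in the exponent. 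The three separate $1+\eps/4$ slacks — the embedding distortion used twice and the Lemma~\ref{lem:lp} stretch once — multiply to at most $1+\eps$ trivially for $\eps\le\tfrac14$, so I expect this last arithmetic to be the only genuinely delicate point: in particular one must check that the $d^{O(d)}$ blowup forced by the factor $1/d$ inside $\eps'$, together with the $\sqrt{d}$ of Lemma~\ref{lem:lp}, does not escape the claimed $2^{\frac{1}{1-\alpha}\eps^{-\tilde{O}(\ddim(M)/\alpha)}}$ form, which it does not because $d$ is already an $\eps^{-\Theta(\ddim(M)/\alpha)}$-type quantity and $\log d$ therefore contributes only lower-order (polylog-in-$d$) factors. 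This is where the tower-of-exponentials shape of the lightness bound is actually pinned down.
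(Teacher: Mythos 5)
Your proposal is correct and follows exactly the paper's route: embed the snowflake into $d$-dimensional $\ell_\infty$ with $d=\frac{\eps^{-O(\ddim(M)/\alpha)}}{1-\alpha}$ via \cite{HM-06,GK-11}, build a light Euclidean spanner with Theorem~\ref{thm:l2} at stretch parameter $O((\eps/d')^2)$, transfer it to $\ell_\infty$ via Lemma~\ref{lem:lp}, and pull it back; the paper states this in one sentence, and you have merely supplied the (correct) details of the pullback, the MST comparison, and the final arithmetic showing the $d^{O(d)}$ and $\sqrt{d}$ factors stay within $2^{\frac{1}{1-\alpha}\eps^{-\tilde{O}(\ddim(M)/\alpha)}}$.
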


\section{Direct snowflake proof}\label{sec:second}

In this section we present a second, tighter proof for the existence of light spanners for
snowflake metrics. Let $(X,\delta)$ be an arbitrary $n$-point doubling metric,
and let $0 < \alpha < 1$ be an arbitrary parameter.
The corresponding snowflake metric $\tilde X := (X,\delta^\alpha)$ is also doubling,
with doubling dimension at most $\frac{\ddim(X)}{\alpha}$.
Denote by $\tilde \Delta = \max_{u, v \in X}\delta^\alpha(u, v)$ the \emph{diameter} of $\tilde X$,
and write $\tilde \ell = \ceil{\log \tilde \Delta}$.
Next we build the sequence of hierarchical nets $\{\tilde N_i\}_{i \geq 0}^{\tilde \ell}$ 
and the net-tree spanner for the snowflake metric $\tilde X$;
denote by $\tilde R$ the sum of radii of all net-points, disregarding the single net-point at level $\tilde \ell$.

In what follows we prove that $\tilde R = O(\frac{1}{1 - \alpha}) \cdot \omega(MST(\tilde X)$, which implies that the lightness 
of the net-tree spanner for $\tilde X$ is $\eps^{-O(\frac{\ddim(X)}{\alpha})} \cdot \frac{1}{1-\alpha}$.
We first give some intuition by considering a trivial metric in Section \ref{sec:int}. The proof of the general case 
proceeds in two stages.
\vspace{-0.1in}
\begin{enumerate}[leftmargin=*]\itemsep2pt \parskip1pt \parsep1pt
\item In the first stage (Section \ref{sec:stage1}) 
we construct an auxiliary graph $\tilde G$ whose weight 
$\tilde W$ is $\Omega(\tilde R)$.
The graph $\tilde G$ will be given as the union of $\tilde\ell$ simple paths,
and is thus more amenable to analysis than the standard net-tree spanner.
The vertex set of this graph $\tilde G$ is equal to $X$.
\item In the second stage (Section \ref{sec:stage2}) we show that $\tilde W = O(\frac{1}{1-\alpha}) \cdot \omega(MST(\tilde X)$.
\end{enumerate}

\subsection{Intuition and High-Level Ideas} \label{sec:int}
Before delving into the proof,  it is instructive to consider
the 1-dimensional Euclidean case, and to restrict our attention to $\alpha = \frac{1}{2}$. 
The intuition behind the general proof is present even in this basic case.

Let $\vartheta = \vartheta_n$ be a set of $n$ points $v_1,\ldots,v_n$ lying on the $x$-axis with coordinates $1,\ldots,n$,
respectively, 
and consider the corresponding
snowflake metric $\tilde \vartheta = (\vartheta, \ell_2^{\frac{1}{2}})$.
(See Figure \ref{vartheta} for an illustration.)
\begin{figure*}[htp]
\begin{center}
\begin{minipage}{\textwidth} 
\begin{center}
\setlength{\epsfxsize}{4.1in} \epsfbox{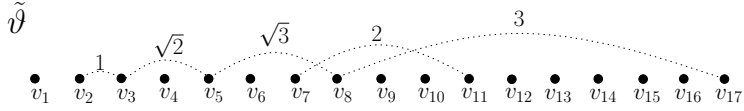}
\end{center}
\end{minipage}
\caption[]{ \label{vartheta} \sf \footnotesize An illustration of the 1-dimensional Euclidean point set 
$\vartheta = \vartheta_{17} = \{v_1,\ldots,v_{17}\}$, and the corresponding snowflake metric 
$\tilde \vartheta = (\vartheta, \ell_2^{\frac{1}{2}})$.
Only 6 edges (out of the total ${17 \choose 2}$ edges) are depicted in the figure, along with their weights
which are set according to the snowflake distance function $\|\cdot\|^{\frac{1}{2}}$;
thus the weight of edge $(v_1,v_{17})$, for example, is given by $|17-1|^{\frac{1}{2}} = 4$.}
\end{center}
\vspace{-0.24in}
\end{figure*}
Since the diameter $\tilde \Delta$ of $\tilde \vartheta$ is $\sqrt{n-1}$,  
the number of levels in the underlying net-tree is given by $\tilde \ell = \lceil \log \tilde \Delta \rceil = \lceil \frac{1}{2} \log 
(n-1) \rceil$. Also, the doubling property implies that the number of $i$-level net-points is proportional to $\frac{n}{2^{2i}}$.
If $\tilde R_i$ stands for the sum of radii of all $i$-level net-points, then  $\tilde R_i = O(\frac{n}{2^{2i}}) \cdot 2^i = O(\frac{n}{2^i})$, so $\tilde R = \sum_{i = 0}^{\tilde \ell-1} \tilde R_i = \sum_{i = 0}^{\tilde \ell-1} O(\frac{n}{2^i}) = O(n)$.
Observe that $\omega(MST(\tilde \vartheta)) = n-1$.
It follows that $\tilde R = O(\omega(MST(\tilde \vartheta)))$, which proves the desired bound on the lightness.\footnote{This argument does not work for the original (non-snowflake) metric $\vartheta = (\vartheta, \|\cdot\|)$,
as there the number of $i$-level net-points is proportional to $\frac{n}{2^{i}}$. Thus the sum of radii of all $i$-level net-points is constant, and therefore $R = \Omega(\log n)$.}

This argument is simple, but
it is unclear how to generalize it for arbitrary snowflake metrics.
We next provide a more involved proof, whose intuition will be used in the general proof.

Suppose for simplicity of the presentation that $n-1$ is an integer power of $4$.
For each index $i \in [0,\tilde \ell-1]$, we choose a subset 
$P_i = \{v_1, v_{1+ 2^{2i}}, v_{1 + 2 \cdot 2^{2i}}, \ldots,v_n\}$
of \emph{pivots} from $\tilde \vartheta$, where the $\ell_2$ distance between any two consecutive $i$-level pivots is exactly $2^{2i}$.
Since $\tilde N_i$ is a $2^i$-packing, the $\ell_2$ distance between any pair of $i$-level net-points is greater than
$2^{2i}$. Consequently, 
it is easy to see that the number $|P_i|$ of $i$-level pivots is greater than 
the number $|\tilde N_i|$ of $i$-level and net-points, i.e., $|P_i| > |\tilde N_i|$.
Let $\Pi_i$ be the simple path connecting all points of $P_i$, i.e., $\Pi_i = (v_{1},v_{1+2^{2i}},v_{1+2\cdot2^{2i}},\ldots,v_n)$.
Notice that the weight of each edge in $\Pi_i$ (under the snowflake distance function) is equal to the radius $2^i$
of $i$-level net-points.
Since $|P_i| \ge |\tilde N_i|$, 
the weight $\omega(\Pi_i)$ of $\Pi_i$ is no smaller than $\tilde R_i$.
Let $\tilde G = \bigcup_{i =0}^{\tilde \ell-1} \Pi_i$ be the union of the $\tilde \ell$ paths $\Pi_0,\ldots,\Pi_{\tilde \ell-1}$,
and let $\tilde W = \sum_{i =0}^{\tilde \ell-1} \omega(\Pi_i)$ be the weight of $\tilde G$.
Observe that $\tilde W \ge \tilde R$.
(See Figure \ref{tildeG}.) 
\begin{figure*}[htp]
\begin{center}
\begin{minipage}{\textwidth} 
\begin{center}
\setlength{\epsfxsize}{4.1in} \epsfbox{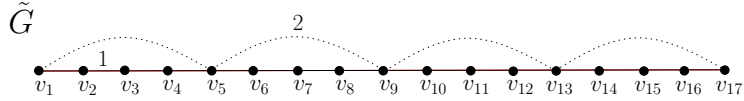}
\end{center}
\end{minipage}
\caption[]{ \label{tildeG} \sf \footnotesize An illustration of $\tilde G 
= \Pi_0 \cup \Pi_1$, in the case that $\vartheta = \{v_1,\ldots,v_{17}\}$ and $\tilde \ell = 2$.
The path $\Pi_0 = (v_1,v_2,\ldots,v_{17})$ consists of 16 edges of unit weight depicted by solid lines,
and the path $\Pi_1 = (v_1,v_5,v_9,v_{13},v_{17})$ consists of four edges of weight 2 depicted by dotted lines.
The   weight $\tilde W$ of $\tilde G$ is equal to $\omega(\Pi_0) + \omega(\Pi_1) = 16 + 8 = 24$.}
\end{center}
\end{figure*}
\vspace{-0.1in}
We have thus reduced the problem of lower bounding $\tilde R$ to that of lower bounding $\tilde W$.

Even though lower bounding $\tilde W$ for this basic 1-dimensional case can be done by a direct calculation,
our goal here is to present a method that can be applied in the general case.

Any edge $(v_l,v_{l+1})$ of path $\Pi_0 = (v_1,\ldots,v_n)$ will be called a \emph{path edge}.
We say that   edge $e  = (v_j,v_k)$ (with $1 \le j < k \le n$) \emph{loads} a path edge $(v_l,v_{l+1})$ if $j \le l < l+1 \le k$.
Thus edge $(v_j,v_k)$ loads the $k-j$ path edges $(v_j,v_{j+1}),\ldots,(v_{k-1},v_k)$.
Next we would like to distribute the weight $\|v_j,v_k\|_2^\frac{1}{2} = |k-j|^\frac{1}{2}$ of edge $e = (v_j,v_k)$ 
over all the path edges which it loads. Specifically, the \emph{load} $\xi_{(v_l,v_{l+1})}(e)$ on path edge $(v_l,v_{l+1})$ caused 
by edge $e = (v_j,v_k) \in \tilde G$, for $j \le l < l+1 \le k$, is 
defined as 
\begin{equation} \label{load}
\xi_{(v_l,v_{l+1})}(e) ~=~ \frac{\|v_j,v_k\|_2^\frac{1}{2}}{\|v_j,v_k\|_2} ~=~ \frac{1}{|k-j|^\frac{1}{2}}.
\end{equation}

This means that if we sum up the loads of the $k-j$ path edges $(v_j,v_{j+1}),\ldots,(v_{k-1},v_k)$ due to edge $e = (v_j,v_k)$,
we have the weight $\|v_j,v_k\|_2^\frac{1}{2} = |k-j|^\frac{1}{2}$ of that edge. 
For each path edge $(v_l,v_{l+1})$, the sum of loads on that edge caused by all edges $e \in \tilde G$
is called the \emph{load of $(v_l,v_{l+1})$ by $\tilde G$}, and is denoted by $\xi_{(v_l,v_{l+1})} = \xi_{(v_l,v_{l+1})}(\tilde G)$. 
The \emph{load of the graph $\tilde G$}, $\xi(\tilde G)$, is the sum of loads over all path edges $(v_l,v_{l+1})$ 
by $\tilde G$, i.e., $\xi(\tilde G) ~=~ \sum_{l \in [n-1]} \xi_{(v_l,v_{l+1})}(\tilde G).$ 
A double counting argument  (see Observation \ref{ob2} below for more details) yields $\tilde W = \xi(\tilde G)$.
We have thus reduced the problem of lower bounding $\tilde R$ to that of lower bounding the load $\xi(\tilde G)$
of $\tilde G$.

To get some intuition, consider first the graph $\tilde G$ from Figure \ref{tildeG}, which corresponds to the case $n = 17$.
It is easy to see that each path edge is loaded by a single edge of $\Pi_i$, for each $i \in [0,1]$.
For example, edge $(v_1,v_2)$ is loaded by edge $(v_1,v_2)$ of path $\Pi_0$ and edge $(v_1,v_5)$ of path $\Pi_1$,
and we thus have 
$$\xi_{(v_1,v_2)}(\tilde G) ~=~ \xi_{(v_1,v_2)}(v_1,v_2) + \xi_{(v_1,v_2)}(v_1,v_5)  
~=~ 1 + \frac{1}{2} ~=~ \frac{3}{2}.$$
Thus $\tilde W = \xi(\tilde G) = \sum_{l \in [16]} \xi_{(v_l,v_{l+1})}(\tilde G) = \frac{3}{2} \cdot 16 = \frac{3}{2} \cdot  \omega(MST(\tilde \vartheta)).$
We turn to the case of general $n$. 
  
Consider any path edge $(v_l,v_{l+1})$. It is loaded by a single edge $e_i$ of $\Pi_i$, for each level $i \in [0,\tilde \ell-1]$,
and its load caused by edge $e_i$ is given by $\xi_{(v_l,v_{l+1})}(e_i)  ~=~ \frac{1}{2^i}$. 
Summing over all $\tilde \ell$ levels, we have $$\xi_{(v_l,v_{l+1})}(\tilde G) ~=~ \sum_{i = 0}^{\tilde \ell-1} \xi_{(v_l,v_{l+1})}(e_i) 
~=~ \sum_{i = 0}^{\tilde \ell-1} \frac{1}{2^i} ~\le~ 2.$$
We conclude that $$\tilde W ~=~ \xi(\tilde G) ~=~ \sum_{l \in [n-1]} \xi_{(v_l,v_{l+1})}(\tilde G) ~\le~ 2(n-1) ~=~ 2 \cdot \omega(MST(\tilde \vartheta)).$$


\subsection{Stage I}\label{sec:stage1}

We will now analyze an arbitrary snowflake metric $\tilde X = (X,\delta^\alpha$), where $0 < \alpha < 1$.
In this first stage we will construct an auxiliary graph $\tilde G = (X,\tilde E,\tilde w)$ 
whose weight $\tilde W = \tilde \omega(\tilde G)$ is at least as large as $\tilde R$ (up to a constant).
The graph $\tilde G$ will be given as the union of $\tilde \ell$ simple paths,
and is thus more convenient for analysis purposes than the standard net-tree spanner.
The vertex set of this graph $\tilde G$ is equal to $X$, and 
$\tilde G$ is equipped with a weight function $\tilde w$ which is dominated by the distance function $\delta^\alpha$.  

To mimic the 1-dimensional case, 
we start by computing a Hamiltonian path $\Pi = (v_1,\ldots,v_n)$ for $\tilde X$ of weight $\omega(\Pi) =  \sum_{i \in [n-1]} \delta^\alpha(v_i,v_{i+1})$ at most $O(\omega(MST(\tilde X)))$. (Computing such a path $\Pi$ 
is a standard procedure, which can be easily carried out given a constant-factor approximation MST for $\tilde X$.) 

As before, we'd like to compute a set $P_i$ of $i$-level pivots, for all $i \in [0,\tilde \ell-1]$ -- but 
this process will be carried out more carefully now.
Having done that, the graph $\tilde G$ will be obtained as before, from the union of the $\tilde \ell$ paths 
$\Pi_0,\ldots,\Pi_{\tilde \ell-1}$, with each $\Pi_i$ connecting the set $P_i$ of $i$-level pivots via  a simple path.

We show how to compute the set $P_i$ of $i$-level pivots, for $i \in [0,\tilde \ell-1]$.
Recall that in the 1-dimensional   case, any two consecutive $i$-level pivots are at (snowflake) 
distance \emph{exactly} $2^{i}$ apart; we cannot achieve this property in the general case.
Instead, we will make sure that the distance between consecutive pivots will be at least $2^{i-1}$.
(The reason we use a distance threshold of $2^{i-1}$ rather than $2^i$ is technical -- this 
enables us to guarantee that $|P_i| \ge |\tilde N_i|$.)

For $i = 0$, we simply take $P_0 = X = \{v_1,\ldots,v_n\}$, and $\Pi_0 = \Pi = (v_1,\ldots,v_n)$.
Next consider $i \in [\tilde \ell-1]$.
The first $i$-level pivot $p^{(i)}_1$ is $v_1$. Having assigned the $j$ first $i$-level pivots 
$p^{(i)}_1,\ldots,p^{(i)}_j$,  the next pivot
$p^{(i)}_{j+1}$ is the first point after $p^{(i)}_{j}$ in $\Pi$ which is at distance at least $2^{i-1}$ from it.
Formally, let $k$ be the index such that $p^{(i)}_j = v_k$,
and let $k'$ be the smallest index after $k$ such that $\delta^\alpha(p^{(i)}_j,v_{k'}) \ge 2^{i-1}$. 
Then $p^{(i)}_{j+1} = v_{k'}$

For $i \in [0,\tilde \ell-1]$, the $i$-level path $\Pi_i$
is a simple path over the $i$-level pivots.
Although an edge $e$ may be very long, 
we set the weight $\tilde \omega(e)$ of each edge $e$ of path $\Pi_i$ to be $2^{i-1}$, 
and it follows that the edge weights
$\tilde \omega(e)$
of edges $e \in \Pi_i$ will be dominated by the corresponding snowflake distances.
(Indeed, by construction, the snowflake distance between any pair of consecutive $i$-level pivots is at least $2^{i-1}$.)
Recall that in the 1-dimensional case we used the weights $\omega(\cdot)$ as given by the snowflake distances -- here we use different weights $\tilde \omega(\cdot)$ that are dominated
by $\omega(\cdot)$.
Denote the edge set of $\Pi_i$ by $\tilde E_i$, with $|\tilde E_i| = |P_i| - 1$.
Let $\tilde G = (X,\tilde E, \tilde \omega)$ be the graph obtained from the union of the $\tilde \ell$ paths
$\Pi_0,\ldots,\Pi_{\tilde \ell-1}$, with $\tilde E = \bigcup_{i = 0}^{\tilde \ell-1} \tilde E_i$,
and $\tilde \omega$ is the weight function as defined above.



As in the 1-dimensional case, we next show that the weight $\tilde W$ of $\tilde G$ is not much smaller than $\tilde R$. 

The analysis starts with the next observation, which follows immediately from the construction.
\begin{observation} \label{ob}
Let $p^{(i)}_j = v_k$ and $p^{(i)}_{j+1} = v_{k'}$ be arbitrary consecutive $i$-level pivots,
with $k < k'$.
Then for any index $j \in [k,k' -1]$, $\delta^\alpha(p^{(i)}_{j},v_{j}) < 2^{i-1}$.
Hence for any $j,\hat j \in [k,k' -1]$, $\delta^\alpha(v_{j},v_{\hat j}) < 2^{i}$.
\end{observation}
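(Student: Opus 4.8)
The plan is to derive both assertions directly from the greedy rule defining the $i$-level pivots, using only that $\delta^\alpha$ is a metric (true for every $0<\alpha<1$, and already invoked at the start of this section when noting that $\tilde X=(X,\delta^\alpha)$ is doubling).

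First I would unwind the construction of the pivot sequence. Writing $p^{(i)}_j=v_k$, its successor $p^{(i)}_{j+1}=v_{k'}$ is by definition the point at the \emph{smallest} index exceeding $k$ with $\delta^\alpha(v_k,v_{k'})\ge 2^{i-1}$. By the minimality of $k'$, every index $m$ with $k<m<k'$ therefore satisfies $\delta^\alpha(v_k,v_m)<2^{i-1}$, and the case $m=k$ is trivial since $\delta^\alpha(v_k,v_k)=0<2^{i-1}$. As $p^{(i)}_j=v_k$, this is precisely the first claim: $\delta^\alpha(p^{(i)}_j,v_m)<2^{i-1}$ for every $m\in[k,k'-1]$. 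For the second claim I would apply the triangle inequality in $(X,\delta^\alpha)$: for any $m,\hat m\in[k,k'-1]$,
$$\delta^\alpha(v_m,v_{\hat m})\;\le\;\delta^\alpha(v_m,p^{(i)}_j)+\delta^\alpha(p^{(i)}_j,v_{\hat m})\;<\;2^{i-1}+2^{i-1}\;=\;2^{i},$$
where the first part of the observation is used twice.

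I do not expect a genuine obstacle: the statement is an immediate corollary of the pivot-selection rule. Two small points I would handle carefully in the final write-up are (i) the symbol $j$ is overloaded in the statement — it serves both as the pivot index and as the running index over $[k,k'-1]$ — so I would rename the running index, say to $m$ as above; and (ii) for $i=0$ the pivots are $P_0=X$, whence $k'=k+1$ and the claim is vacuous, so the greedy-minimality argument is only needed for $i\ge 1$.
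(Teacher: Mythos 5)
Your argument is correct and coincides with what the paper intends: the paper gives no explicit proof, stating only that the observation ``follows immediately from the construction,'' and the intended reasoning is exactly yours --- minimality of the index $k'$ in the greedy pivot rule for the first claim, and the triangle inequality in the snowflake metric $(X,\delta^\alpha)$ for the second. Your side remarks (renaming the overloaded index $j$, and noting that for $i=0$ the range $[k,k'-1]$ is a singleton so the claim is trivial) are accurate and only improve the write-up.
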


We argue that the number of $i$-level pivots is no smaller than the number of $i$-level net-points.
\begin{lemma} \label{lm0}
$|P_i| \ge |\tilde N_i|$. 
\end{lemma}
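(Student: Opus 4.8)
The plan is to exhibit an injection from the set $\tilde N_i$ of $i$-level net-points into the set $P_i$ of $i$-level pivots. The natural candidate is the map that sends each net-point to a nearby pivot along the path $\Pi$, but we must be careful that this map is injective. The key structural fact, already recorded in Observation \ref{ob}, is that within each ``pivot block'' --- the maximal run of path vertices $v_k, v_{k+1}, \ldots, v_{k'-1}$ between two consecutive $i$-level pivots $p^{(i)}_j = v_k$ and $p^{(i)}_{j+1} = v_{k'}$ --- any two vertices are at snowflake distance strictly less than $2^i$. Since $\tilde N_i$ is a $2^i$-packing, no two distinct $i$-level net-points can both lie inside the same pivot block. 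Hence each block contains at most one net-point.

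First I would make precise the decomposition of $X = \{v_1,\ldots,v_n\}$ into pivot blocks: writing $P_i = \{p^{(i)}_1,\ldots,p^{(i)}_{m}\}$ with $m = |P_i|$, the blocks are $B_1,\ldots,B_m$, where $B_j$ is the set of path vertices from $p^{(i)}_j$ up to (but not including) $p^{(i)}_{j+1}$ for $j < m$, and $B_m$ is the remaining tail starting at $p^{(i)}_m$. These blocks partition $X$. Next, since $\tilde N_i \subseteq X$, every net-point lies in exactly one block $B_j$; by the packing argument above, the number of blocks containing a net-point is at least $|\tilde N_i|$, and since there are exactly $|P_i|$ blocks, we get $|P_i| \ge |\tilde N_i|$. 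Formally, the map sending a net-point to the index of its block is injective into $\{1,\ldots,m\}$.

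I would double-check the edge case for the first block: $p^{(i)}_1 = v_1$ and all the net-point arguments go through identically. The only subtlety worth a sentence is why Observation \ref{ob} gives distance $< 2^i$ rather than $< 2^{i-1}$: two vertices $v_j, v_{\hat j}$ in the same block are each within $2^{i-1}$ (in snowflake distance) of the block's leading pivot, so by the triangle inequality for $\delta^\alpha$ (which is a metric, since the snowflake of a metric is a metric for $0<\alpha<1$) they are within $2^i$ of each other --- and since $\tilde N_i$ is a $2^i$-\emph{packing}, meaning pairwise distances are \emph{strictly} greater than $2^i$, at most one of them is a net-point.

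\textbf{Main obstacle.} There is essentially no deep obstacle here; the lemma is a clean pigeonhole argument and the work was front-loaded into the careful choice of the threshold $2^{i-1}$ in the pivot construction (as the authors flagged, using $2^i$ there would force block diameters up toward $2^{i+1}$ and break the packing comparison). The one place demanding mild care is bookkeeping with the strict-versus-nonstrict inequalities --- ensuring ``distance $< 2^i$ within a block'' is incompatible with ``distance $> 2^i$ between net-points'' --- and confirming the blocks genuinely partition $X$ (in particular that the construction of pivots terminates and exhausts the path, which it does since the last pivot's tail block is nonempty and every vertex is assigned).
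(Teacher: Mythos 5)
Your proof is correct and follows essentially the same route as the paper: partition the path into blocks delimited by consecutive $i$-level pivots, use Observation \ref{ob} plus the triangle inequality to see each block has snowflake diameter less than $2^i$, and invoke the $2^i$-packing property of $\tilde N_i$ to conclude each block contains at most one net-point, giving $|P_i| \ge |\tilde N_i|$. Your explicit handling of the tail block after the last pivot and of the strict inequalities is just a more careful write-up of the same pigeonhole argument.
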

\begin{proof}
Since the $i$-level net $\tilde N_i$ is a $2^i$-packing, any two $i$-level net-points are at distance at least $2^i$ apart.
By Observation \ref{ob}, for any two consecutive $i$-level pivots
$p^{(i)}_j = v_k$ and $p^{(i)}_{j+1} = v_{k'}$, with $k < k'$, at most one point from $\{v_k,\ldots,v_{k'-1}\}$
belongs to $\tilde N_i$. The lemma follows.
\QED
\end{proof}

\begin{lemma} \label{lm2}
For each $i \in [0,\tilde \ell-1]$, $|\tilde E_i| ~=~ |P_i|-1 ~\ge~ \frac{1}{2} \cdot |P_i| ~\ge~ \frac{1}{2} \cdot |\tilde N_i|$.
\end{lemma}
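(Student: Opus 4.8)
The statement is essentially a bookkeeping consequence of Lemma \ref{lm0} together with the trivial observation that a nonempty set $P_i$ satisfies $|P_i| - 1 \ge \frac{1}{2}|P_i|$ precisely when $|P_i| \ge 2$. So the plan is first to dispose of the degenerate case $|P_i| = 1$, and then to invoke the two inequalities $|\tilde E_i| = |P_i| - 1$ (which holds by definition of a simple path on $|P_i|$ vertices, as recorded just above the lemma) and $|P_i| \ge |\tilde N_i|$ (Lemma \ref{lm0}).

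For the main case, I would argue that for every $i \in [0, \tilde\ell - 1]$ we have $|P_i| \ge 2$. The cleanest route is via Lemma \ref{lm0}: since $i \le \tilde\ell - 1$, the net $\tilde N_i$ is a $2^i$-net of a metric whose diameter $\tilde\Delta$ satisfies $\log \tilde\Delta \ge \tilde\ell - 1 \ge i$, hence (provided $n \ge 2$, which we may assume, as otherwise there is nothing to prove) $\tilde N_i$ contains at least two points — a single net-point would force $\tilde\Delta \le 2^i < 2^{\tilde\ell - 1}$, contradicting $\tilde\ell = \ceil{\log\tilde\Delta}$. By Lemma \ref{lm0}, $|P_i| \ge |\tilde N_i| \ge 2$, and therefore $|P_i| - 1 \ge \frac{1}{2}|P_i|$. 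Chaining this with $|\tilde E_i| = |P_i| - 1$ and $|P_i| \ge |\tilde N_i|$ gives
$$|\tilde E_i| ~=~ |P_i| - 1 ~\ge~ \tfrac{1}{2}|P_i| ~\ge~ \tfrac{1}{2}|\tilde N_i|,$$
which is exactly the claim.

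The only mild subtlety — and the step I expect to be the "obstacle," though it is minor — is justifying $|P_i| \ge 2$ cleanly, since this is really where the choice of the distance threshold $2^{i-1}$ (rather than $2^i$) in the pivot construction pays off: one wants to be sure the construction does not terminate with a single pivot. An alternative, perhaps more self-contained, justification is to note directly that the construction of $P_i$ proceeds until the Hamiltonian path $\Pi$ is exhausted, so $P_i$ is a singleton only if every point of $X$ lies within snowflake-distance $2^{i-1}$ of $v_1$; but then $\tilde\Delta < 2^i \le 2^{\tilde\ell - 1}$, again contradicting $\tilde\ell = \ceil{\log \tilde\Delta}$ (assuming $n \ge 2$). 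Either way, once $|P_i| \ge 2$ is in hand the rest is immediate, so I would keep this argument to a single sentence.
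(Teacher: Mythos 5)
Your reduction to showing $|P_i|\ge 2$ is the right skeleton (the equality $|\tilde E_i|=|P_i|-1$ and the inequality $|P_i|\ge|\tilde N_i|$ from Lemma \ref{lm0} are indeed immediate), but your \emph{primary} justification of $|P_i|\ge 2$ is wrong. You claim that $|\tilde N_i|=1$ for some $i\le\tilde\ell-1$ would force $\tilde\Delta\le 2^i$, contradicting $\tilde\ell=\ceil{\log\tilde\Delta}$. This does not follow: $\tilde N_i$ is only a $2^i$-cover of $\tilde N_{i-1}$ (not of $X$), and even a direct $2^i$-cover of $X$ by a single point would only give $\tilde\Delta\le 2\cdot 2^i$ by the triangle inequality; through the hierarchy one gets only $\tilde\Delta<2^{i+2}$, which is perfectly consistent with $i=\tilde\ell-1$. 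In fact $|\tilde N_{\tilde\ell-1}|=1$ can genuinely occur: take three points with pairwise snowflake distances $0.9\cdot 2^i$, $0.9\cdot 2^i$, $1.8\cdot 2^i$, so that $\tilde\Delta=1.8\cdot 2^i$ and $\tilde\ell=i+1$; the middle point alone is a valid $2^i$-net of the previous level, since it covers the other two within $0.9\cdot 2^i\le 2^i$. So the chain $|P_i|\ge|\tilde N_i|\ge 2$ cannot be the argument -- the lemma's inequalities hold, but $|\tilde N_i|\ge 2$ is simply not true in general, and $|P_i|\ge 2$ must be established directly from the pivot construction.

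Fortunately, the ``alternative, more self-contained'' argument you relegate to the end is precisely the paper's proof, and it is correct: if $P_i$ were a singleton, every point of $X$ would lie within snowflake distance $2^{i-1}$ of $v_1$, whence by the triangle inequality $\tilde\Delta<2^i\le 2^{\tilde\ell-1}$, contradicting $2^{\tilde\ell-1}<\tilde\Delta$ (which follows from $\tilde\ell=\ceil{\log\tilde\Delta}$ and $\tilde\Delta>1$ for $n\ge 2$). Promote that argument to be the sole justification of $|P_i|\ge 2$ and drop the route through $|\tilde N_i|$; with that change the proof matches the paper's.
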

\begin{proof}
The first equality is immediate from the construction and the third inequality follows from Lemma \ref{lm0}. 
In what follows we prove the second inequality $|P_i|-1 ~\ge~ \frac{1}{2} \cdot |P_i|$, or equivalently $|P_i| \ge 2$.


If all indices $k \in [2,n]$ satisfied $\delta^\alpha(p^{(i)}_{1} = v_1,v_k) < 2^{i-1}$,
the distance between any two points of $X$ would be smaller than $2^i < \tilde \Delta$, a contradiction.
Let $k'$ be the smallest index for which $\delta^\alpha(p^{(i)}_{1} = v_1,v_{k'}) \ge 2^{i-1}$.
By construction, $v_{k'}$ will be the second $i$-level pivot $p^{(i)}_{2}$.
Hence $|P_i| \ge 2$, and we are done.
\QED
\end{proof}

Lemma  \ref{lm2} implies that the weight $\tilde W = \tilde \omega(\tilde G)$ of $\tilde G$ is not much smaller than $\tilde R$.
\begin{corollary} \label{corwt}
$\tilde W \ge \frac{\tilde R}{4}$.
\end{corollary}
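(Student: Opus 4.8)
The plan is to compare $\tilde R$ and $\tilde W$ level by level; the comparison at each level follows immediately from Lemma \ref{lm2}, so no real difficulty remains. Write $\tilde R_i := |\tilde N_i|\cdot 2^i$ for the sum of the radii of all $i$-level net-points (recall from the preliminaries that the radius of an $i$-level net-point equals $2^i$), so that by definition $\tilde R = \sum_{i=0}^{\tilde \ell-1}\tilde R_i$, the single top-level net-point being excluded. On the other side, every edge of the path $\Pi_i$ was assigned weight $\tilde\omega(\cdot)=2^{i-1}$, and $\Pi_i$ has $|\tilde E_i| = |P_i|-1$ edges, so $\tilde\omega(\Pi_i) = (|P_i|-1)\cdot 2^{i-1}$ and hence $\tilde W = \sum_{i=0}^{\tilde \ell-1}\tilde\omega(\Pi_i)$; note that the two sums run over exactly the same range $i\in[0,\tilde\ell-1]$, so a level-by-level argument is legitimate.

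Next I would invoke Lemma \ref{lm2}, which states $|P_i|-1 \ge \tfrac12 |\tilde N_i|$ for each $i\in[0,\tilde\ell-1]$. Substituting this into the expression for $\tilde\omega(\Pi_i)$ gives $\tilde\omega(\Pi_i) \ge \tfrac12 |\tilde N_i|\cdot 2^{i-1} = \tfrac14 |\tilde N_i|\cdot 2^i = \tfrac14 \tilde R_i$. Summing this inequality over all levels $i = 0,\dots,\tilde\ell-1$ yields $\tilde W = \sum_i \tilde\omega(\Pi_i) \ge \tfrac14\sum_i \tilde R_i = \tfrac14\tilde R$, which is the claimed bound.

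There is essentially no obstacle at this step: all the substantive work — most notably the lower bound $|P_i|\ge|\tilde N_i|$ of Lemma \ref{lm0}, which rests on the packing property of the nets together with Observation \ref{ob} — has already been carried out, and Lemma \ref{lm2} converts it into the clean form needed here. The only points requiring mild care are bookkeeping ones: making sure the factor $2^{i-1}$ rather than $2^i$ in the edge weights of $\Pi_i$ is correctly accounted for (it is responsible for the $\tfrac14$ rather than $\tfrac12$ in the conclusion), and checking that the level ranges of $\tilde R$ and $\tilde W$ match so that the per-level inequalities may simply be added. With these observations in place the corollary follows in a couple of lines.
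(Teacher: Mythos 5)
Your proof is correct and matches the paper's argument exactly: both apply Lemma \ref{lm2} level by level to get $\tilde\omega(\Pi_i) = |\tilde E_i|\cdot 2^{i-1} \ge \tfrac12|\tilde N_i|\cdot 2^{i-1} = \tfrac14\tilde R_i$, and then sum over $i\in[0,\tilde\ell-1]$. Nothing is missing.
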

\begin{proof}
For each $i \in [0,\tilde \ell-1]$, we denote by $\tilde R_i$ the sum of radii of all $i$-level net-points. Observe that $\tilde R_i = |\tilde N_i| \cdot 2^i$.
By Lemma \ref{lm2} and the construction, we have
$$\tilde  W ~=~ \sum_{i \in [0,\tilde \ell-1]} \tilde \omega(\Pi_i) ~=~  \sum_{i \in [0,\tilde \ell-1]} |\tilde E_i| \cdot 2^{i-1}
~\ge~ \sum_{i \in [0,\tilde \ell-1]} \frac{1}{2} \cdot |\tilde N_i| \cdot 2^{i-1} ~=~ \sum_{i \in [0,\tilde \ell-1]}\frac{1}{4} \cdot \tilde R_i ~=~ \frac{\tilde R}{4}. \inQED$$
\end{proof}

\subsection{Stage II} \label{sec:stage2}

In this second stage we show that $\tilde W = O(\frac{1}{1-\alpha}) \cdot \omega(MST(\tilde X)$.
We use a charging scheme, which generalizes the one used in Section \ref{sec:int}:
\begin{enumerate}
\item First, we distribute the weight of each edge of $\tilde G$ between the path edges in $\Pi$ that it ``loads''.
(This is where we leverage on the fact that the metric $\tilde X  = (X,\delta^\alpha)$ is a snowflake, by using the original distance function $\delta$.)
\item Second, we show that the load incurred in this way by each path edge $(v_l,v_{l+1})$ is  $O(\frac{1}{1-\alpha}) \cdot \delta^\alpha(v_l,v_{l+1})$.
This implies that the weight $\tilde W = \tilde \omega(\tilde G)$ of $\tilde G$ does not exceed the weight $\omega(\Pi)$ of the underlying path $\Pi$ by more than 
a factor of $\frac{1}{1-\alpha}$,
thereby giving $\tilde W = O(\frac{1}{1-\alpha}) \cdot \omega(\Pi) =  O(\frac{1}{1-\alpha}) \cdot \omega(MST(\tilde X))$.
\end{enumerate}

The \emph{path distance} $\delta_\Pi(v_j,v_k)$ between a pair $v_j,v_k \in X$ of points 
is given by $\sum_{i=j}^{k-1}\delta(v_i,v_{i+1})$, i.e., the path distance is defined with respect to the original (non-snowflake) distance function $\delta$.
As in Section \ref{sec:int}, any edge $(v_l,v_{l+1})$ of path $\Pi = (v_1,\ldots,v_n)$ will be called a \emph{path edge}.
Also, we say that edge $e  = (v_j,v_k)$ (with $1 \le j < k \le n$) \emph{loads} a path edge $(v_l,v_{l+1})$ if $j \le l < l+1 \le k$. Thus edge $(v_j,v_k)$ loads the $k-j$ path edges $(v_j,v_{j+1}),\ldots,(v_{k-1},v_k)$.
Next we would like to distribute the weight $\tilde \omega(e)$ of edge $e = (v_j,v_k)$ to all the path edges that it loads. Specifically, the \emph{load} $\xi_{(v_l,v_{l+1})}(e)$ on path edge $(v_l,v_{l+1})$ caused by edge $e = (v_j,v_k) \in \tilde G$, for $j \le l < l+1 \le k$, is 
defined as 
\begin{equation} \label{load2}
\xi_{(v_l,v_{l+1})}(e) ~=~ \tilde \omega(e) \cdot  \frac{\delta(v_l,v_{l+1})}{\delta_\Pi(v_j,v_k)}.
\end{equation}
(Note that this definition generalizes  Equation (\ref{load}) from Section \ref{sec:int}.) 
It is easy to see that the weight $\tilde \omega(e)$ of edge $e$ in $\tilde G$ is distributed between all the path edges that it loads, so that the sum of loads on these path edges caused by edge $e$ is equal to $\tilde \omega(e)$. Also, the load on a specific path edge $(v_l,v_{l+1})$ caused by edge $e \in \tilde G$ is relative
to the ratio between the weight of this path edge (with respect to the original metric $\delta$) and the total weight of all the path edges that are loaded by edge $e$ 
(also with respect to the original metric $\delta$), where the latter term is exactly the path distance between the two endpoints of $e$.

For each path edge $(v_l,v_{l+1})$, the sum of loads on that edge caused by all edges $e \in \tilde G$,
is called the \emph{load of $(v_l,v_{l+1})$ by $\tilde G$},
and denoted by $\xi_{(v_l,v_{l+1})} = \xi_{(v_l,v_{l+1})}(\tilde G)$.
The \emph{load of the graph $\tilde G$}, $\xi(\tilde G)$, is the sum of loads over all path edges $(v_l,v_{l+1})$ 
by $\tilde G$, i.e., $\xi(\tilde G) ~=~ \sum_{l \in [n-1]} \xi_{(v_l,v_{l+1})}(\tilde G).$ 
A double counting yields:
\vspace{-0.14in}
\begin{observation} \label{ob2}
$\tilde W ~=~ \sum_{e \in \tilde G} \tilde \omega(e) ~=~ 
\sum_{e \in \tilde G} \sum_{l \in [n-1]} \xi_{(v_l,v_{l+1})}(e) ~=~
\sum_{l \in [n-1]} \xi_{(v_l,v_{l+1})}(\tilde G) ~=~ \xi(\tilde G)$.
\end{observation}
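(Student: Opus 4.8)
The plan is to establish the four claimed equalities from left to right, each of which amounts to bookkeeping. The first equality, $\tilde W = \sum_{e \in \tilde G}\tilde\omega(e)$, is simply the definition of the weight of the auxiliary graph $\tilde G$ constructed in Stage I, together with $\tilde W = \tilde\omega(\tilde G)$.

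For the second equality I would fix a single edge $e = (v_j,v_k) \in \tilde G$ and evaluate $\sum_{l \in [n-1]}\xi_{(v_l,v_{l+1})}(e)$. By the definition of ``loads'', $\xi_{(v_l,v_{l+1})}(e)$ is nonzero only when $j \le l < l+1 \le k$, i.e.\ for $l \in \{j,\ldots,k-1\}$, so (using the convention that an edge contributes zero load to any path edge it does not load) the sum collapses to $\sum_{l=j}^{k-1}\tilde\omega(e)\cdot \frac{\delta(v_l,v_{l+1})}{\delta_\Pi(v_j,v_k)}$. Pulling out the constant factor $\tilde\omega(e)/\delta_\Pi(v_j,v_k)$ and invoking the telescoping identity $\sum_{l=j}^{k-1}\delta(v_l,v_{l+1}) = \delta_\Pi(v_j,v_k)$, which holds by the very definition of the path distance $\delta_\Pi$, this equals $\tilde\omega(e)$. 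Summing this over all $e \in \tilde G$ yields the second equality.

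The third equality is just an interchange of the two finite summation orders, $\sum_{e}\sum_{l} = \sum_{l}\sum_{e}$, followed by recognizing $\sum_{e \in \tilde G}\xi_{(v_l,v_{l+1})}(e)$ as precisely the definition of $\xi_{(v_l,v_{l+1})}(\tilde G)$, the load of the path edge $(v_l,v_{l+1})$ by $\tilde G$. The fourth equality is then nothing more than the definition of $\xi(\tilde G)$ as the sum over all path edges of their loads by $\tilde G$.

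There is no substantive obstacle here; the only points requiring a moment of care are the telescoping identity for $\delta_\Pi$ and the convention that the inner sum over $l \in [n-1]$ may be freely restricted to the range of path edges actually loaded by $e$. I would state both explicitly up front and then let the elementary algebra run. (This is exactly the generalization of the $\alpha=\tfrac12$ computation in Section~\ref{sec:int}, where $\tilde\omega(e) = \|v_j,v_k\|_2^{1/2}$, $\delta = \|\cdot\|_2$, and $\delta_\Pi(v_j,v_k) = |k-j| = \|v_j,v_k\|_2$, so that $\xi_{(v_l,v_{l+1})}(e)$ reduces to $1/|k-j|^{1/2}$ as in Equation~(\ref{load}).)
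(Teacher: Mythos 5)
Your proposal is correct and matches the paper's argument: the paper dispatches this observation as ``a double counting,'' relying on the fact (stated just before the observation) that each edge's weight $\tilde\omega(e)$ is exactly distributed among the path edges it loads, which is precisely your telescoping identity $\sum_{l=j}^{k-1}\delta(v_l,v_{l+1}) = \delta_\Pi(v_j,v_k)$ followed by the interchange of summations. You have simply made explicit what the paper leaves implicit.
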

We have thus reduced the problem of lower bounding $\tilde R$ to that of lower bounding the load $\xi(\tilde G)$
of $\tilde G$.


We use the following lemma to complete the argument.
\begin{lemma} \label{lm3}
For any index $l \in [n-1]$, $\xi_{(v_l,v_{l+1})}(\tilde G) = O(\frac{1}{1-\alpha}) \cdot \delta^\alpha(v_l,v_{l+1})$.
\end{lemma}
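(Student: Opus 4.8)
The plan is to fix a path edge $(v_l,v_{l+1})$ and bound the load it receives level by level. The first observation is structural: for each level $i\in[0,\tilde\ell-1]$, \emph{at most one} edge of the path $\Pi_i$ loads $(v_l,v_{l+1})$. Indeed, the edges of $\Pi_i$ join consecutive $i$-level pivots in their order along $\Pi$, so the index-intervals $[\,\mathrm{ind}(p^{(i)}_j),\,\mathrm{ind}(p^{(i)}_{j+1})-1\,]$ of these edges are pairwise disjoint, and $l$ lies in at most one of them. Denote by $e_i=(p^{(i)}_{j},p^{(i)}_{j+1})$ this edge (when it exists); since $\tilde\omega(e_i)=2^{i-1}$, Equation~(\ref{load2}) gives
$$\xi_{(v_l,v_{l+1})}(\tilde G)\;=\;\sum_{i=0}^{\tilde\ell-1}\xi_{(v_l,v_{l+1})}(e_i)\;=\;\sum_{i=0}^{\tilde\ell-1}2^{i-1}\cdot\frac{\delta(v_l,v_{l+1})}{\delta_\Pi(e_i)},$$
where $\delta_\Pi(e_i)$ abbreviates the path distance between the two endpoints of $e_i$, and a missing $e_i$ contributes $0$.

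The heart of the argument is a two-sided lower bound on $\delta_\Pi(e_i)$. On the one hand, $(v_l,v_{l+1})$ is itself among the (nonnegatively weighted) path edges whose $\delta$-lengths are summed in $\delta_\Pi(e_i)$, so $\delta_\Pi(e_i)\ge\delta(v_l,v_{l+1})$. On the other hand, by the triangle inequality for $\delta$ together with the fact that (by construction) consecutive $i$-level pivots are at snowflake distance at least $2^{i-1}$, we get $\delta_\Pi(e_i)\ge\delta(p^{(i)}_j,p^{(i)}_{j+1})=\delta^\alpha(p^{(i)}_j,p^{(i)}_{j+1})^{1/\alpha}\ge 2^{(i-1)/\alpha}$. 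Taking the larger of these two lower bounds in the displayed sum yields, for every $i$,
$$\xi_{(v_l,v_{l+1})}(e_i)\;\le\;\min\!\Big(\,2^{i-1},\; 2^{(i-1)(1-1/\alpha)}\,\delta(v_l,v_{l+1})\,\Big).$$

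It remains to sum these bounds. Split the range of $i$ at the level $i_0=\lceil 1+\log\delta^\alpha(v_l,v_{l+1})\rceil$ where the two arguments of the $\min$ cross; note $i_0\ge1$ since the normalization forces $\delta(v_l,v_{l+1})\ge1$. For $i<i_0$ use the bound $2^{i-1}$: the partial geometric sum is at most $2^{i_0}=O(\delta^\alpha(v_l,v_{l+1}))$. For $i\ge i_0$ use the bound $2^{(i-1)(1-1/\alpha)}\delta(v_l,v_{l+1})$: since $1/\alpha-1>0$ this is a geometric series in $i$ with ratio $2^{-(1-\alpha)/\alpha}<1$, hence at most its first term times $\big(1-2^{-(1-\alpha)/\alpha}\big)^{-1}$. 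Substituting $i_0-1\ge\log\delta^\alpha(v_l,v_{l+1})$ and $\delta(v_l,v_{l+1})=\delta^\alpha(v_l,v_{l+1})^{1/\alpha}$ shows the first term is $O(\delta^\alpha(v_l,v_{l+1}))$, while the elementary inequality $1-2^{-x}\ge\tfrac12 x\ln 2$ for $0\le x\le1$ (and $1-2^{-x}\ge\tfrac12$ for $x\ge1$), applied with $x=(1-\alpha)/\alpha$, gives $\big(1-2^{-(1-\alpha)/\alpha}\big)^{-1}=O\!\big(\tfrac{1}{1-\alpha}\big)$. Adding the two contributions yields $\xi_{(v_l,v_{l+1})}(\tilde G)=O\!\big(\tfrac{1}{1-\alpha}\big)\cdot\delta^\alpha(v_l,v_{l+1})$, as claimed.

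The step I expect to be the main obstacle is exactly this two-regime split. A single uniform lower bound on $\delta_\Pi(e_i)$ is not enough: using $\delta_\Pi(e_i)\ge\delta(v_l,v_{l+1})$ at all levels gives a bound growing like $2^{\tilde\ell}$, whereas using $\delta_\Pi(e_i)\ge 2^{(i-1)/\alpha}$ at all levels gives a bound proportional to $\delta(v_l,v_{l+1})$ rather than $\delta^\alpha(v_l,v_{l+1})$ — too weak whenever the path edge is long. One must interpolate between the two bounds, and the $\tfrac{1}{1-\alpha}$ factor in the statement is precisely the divergence of the geometric tail as $\alpha\to1$, i.e.\ as the snowflake exponent approaches that of the base metric.
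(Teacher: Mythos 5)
Your proof is correct and follows essentially the same route as the paper's: one loading edge per level with weight $2^{i-1}$, the two lower bounds $\delta_\Pi(e_i)\ge\delta(v_l,v_{l+1})$ and $\delta_\Pi(e_i)\ge 2^{(i-1)/\alpha}$ (via the triangle inequality and the pivot-spacing guarantee), and a split of the levels at roughly $\log\delta^\alpha(v_l,v_{l+1})$ with a geometric tail yielding the $O\!\left(\frac{1}{1-\alpha}\right)$ factor. Your explicit justifications of the single-edge-per-level fact and of $\bigl(1-2^{-(1-\alpha)/\alpha}\bigr)^{-1}=O\!\left(\frac{1}{1-\alpha}\right)$ are details the paper leaves implicit, but the argument is the same.
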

\begin{proof}
Fix any index $l \in [n-1]$. 
Note that edge $(v_l,v_{l+1})$ is loaded by a single edge of $\Pi_i$, for each $i \in [0,\tilde \ell-1]$, denoted $e_i$.
Specifically, edge $e_i$ connects a pair of consecutive $i$-level pivots $p^{(i)}_{j} = v_k,p^{(i)}_{j+1} = v_{k'}$, such
that $k \le l < l+1 \le k'$, and $\delta^\alpha(p^{(i)}_{j},p^{(i)}_{j+1}) \ge 2^{i-1}$.
Note that all edges of $\Pi_i$ in $\tilde G$ have weight $2^{i-1}$, and so $\tilde \omega(e_i) = 2^{i-1}$.
Thus the load on edge $(v_l,v_{l+1})$ incurred in level $i \in [0,\tilde \ell-1]$ is given by
$$\xi_{(v_l,v_{l+1})}(e_i) ~=~ \tilde \omega(e_i) \cdot \frac{\delta(v_l,v_{l+1})}{\delta_\Pi(p^{(i)}_{j},p^{(i)}_{j+1})}
~=~ 2^{i-1} \cdot \frac{\delta(v_l,v_{l+1})}{\delta_\Pi(p^{(i)}_{j},p^{(i)}_{j+1})},$$
and so the total load on edge $(v_l,v_{l+1})$ by $\tilde G$ is equal to $$\xi_{(v_l,v_{l+1})}(\tilde G) ~=~ \sum_{i \in [0,\tilde \ell-1]} \xi_{(v_l,v_{l+1})}(e_i)
~=~  \sum_{i \in [0,\tilde \ell-1]} 2^{i-1} \cdot \frac{\delta(v_l,v_{l+1})}{\delta_\Pi(p^{(i)}_{j},p^{(i)}_{j+1})}.$$ 
Define $\eta = \delta^\alpha(v_l,v_{l+1}), t = \lceil \log \eta \rceil$. 
We first bound the load on edge $(v_l,v_{l+1})$ incurred in levels $i \in [0,t]$. 
Since $\delta(v_l,v_{l+1}) \le \delta_\Pi(p^{(i)}_{j},p^{(i)}_{j+1})$, we have $\xi_{(v_l,v_{l+1})}(e_i) \le 2^{i-1}$.
It follows that
$$\sum_{i \in [0,t]} \xi_{(v_l,v_{l+1})}(e_i) ~\le~ \sum_{i \in [0,t]} 2^{i-1} ~<~  2\eta ~=~ O(\delta^\alpha(v_l,v_{l+1})).$$
Next, we bound the load on edge $(v_l,v_{l+1})$ incurred in levels $i \in [t+1,\tilde \ell-1]$.
We have $$\delta^\alpha(p^{(i)}_{j},p^{(i)}_{j+1}) ~\ge~ 2^{i-1} =2^t \cdot 2^{i-1-t} ~\ge~ \eta \cdot 2^{i-1-t},$$
and so  $\delta(p^{(i)}_{j},p^{(i)}_{j+1}) \ge \eta^{\frac{1}{\alpha}} \cdot 2^{\frac{i-1-t}{\alpha}}$.
By the triangle inequality,  $\delta_\Pi(p^{(i)}_{j},p^{(i)}_{j+1}) \ge \delta(p^{(i)}_{j},p^{(i)}_{j+1}) \ge \eta^{\frac{1}{\alpha}} \cdot 2^{\frac{i-1-t}{\alpha}}$.
Note also that $2^{i-1} = 2^{t} \cdot 2^{i-1-t} \le 2\eta \cdot 2^{i-1-t}$.
Hence the load on edge $(v_l,v_{l+1})$ incurred in level $i \in [t+1,\tilde \ell-1]$ is given by
$$\xi_{(v_l,v_{l+1})}(e_i) ~=~ 2^{i-1}\cdot \frac{\delta(v_l,v_{l+1})}{\delta_\Pi(p^{(i)}_{j},p^{(i)}_{j+1})} 
~\le~  2\eta \cdot 2^{i-1-t} \cdot   \frac{\eta^{\frac{1}{\alpha}}}{\eta^{\frac{1}{\alpha}} \cdot 2^{\frac{i-1-t}{\alpha}}} ~=~ 2\eta \cdot 2^{(i-1-t)(1-\frac{1}{\alpha})}.$$
It follows that $$\sum_{i \in [t+1,\tilde \ell-1]} \xi_{(v_l,v_{l+1})}(e_i) ~\le~ 2\eta \cdot \sum_{i \in [t+1,\tilde \ell-1]} 2^{(i-1-t)(1-\frac{1}{\alpha})} 
~\le~  \frac{2\eta}{1-2^{(1-\frac{1}{\alpha})}}
~=~  O\left(\frac{1}{1-\alpha}\right) \cdot \delta^\alpha(v_l,v_{l+1}).$$
Summarizing, we have $$\xi_{(v_l,v_{l+1})}(\tilde G) ~=~ \sum_{i \in [0,t]} \xi_{(v_l,v_{l+1})}(e_i) + \sum_{i \in [t+1,\tilde \ell-1]} \xi_{(v_l,v_{l+1})}(e_i) ~\le~ 
O\left(\frac{1}{1-\alpha}\right) \cdot \delta^\alpha(v_l,v_{l+1}).  \inQED$$
\end{proof}
\noindent
{\bf Wrapping Up.~}
Recall that $\Pi = (v_1,\ldots,v_n)$ is a Hamiltonian path for $\tilde X$ of weight 
$\omega(\Pi) = \sum_{l \in [n-1]} \delta^\alpha(v_l,v_{l+1})$ at most $O(\omega(MST(\tilde X)))$.
Observation \ref{ob2} and Lemma \ref{lm3}  imply that $$\tilde W ~=~ \sum_{e \in \tilde G} \tilde \omega(e) ~=~ \sum_{l \in [n-1]} \xi_{(v_l,v_{l+1})}(\tilde G) ~=~ O\left(\frac{1}{1-\alpha}\right) \cdot \sum_{l \in [n-1]} \delta^\alpha(v_l,v_{l+1}) ~=~ O\left(\frac{1}{1-\alpha}\right) \cdot \omega(MST(\tilde X)).$$
By Corollary \ref{corwt},
$\tilde R \le 4 \cdot \tilde W = O(\frac{1}{1-\alpha}) \cdot \omega(MST(\tilde X))$. 
Finally, recall that the weight of the net-tree spanner for $\tilde X$ exceeds $\tilde R$ by at most a factor of 
$\eps^{-O(\frac{\ddim(X)}{\alpha})}$, and so its lightness is bounded by 
$\eps^{-O(\frac{\ddim(X)}{\alpha})} \cdot \frac{1}{1-\alpha}$.
\begin{theorem} \label{thm:basic}
Let $M=(X,\delta)$ be an $n$-point doubling metric.
For any $0 < \alpha < 1$ and   $\eps > 0$,
the lightness of the net-tree spanner for the $\alpha$-snowflake metric $(X,\delta^\alpha)$
is $\eps^{-O(\frac{\ddim(M)}{\alpha})} \cdot \frac{1}{1-\alpha}$.
\end{theorem}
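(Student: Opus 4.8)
The plan is to follow exactly the two-stage blueprint laid out in Sections \ref{sec:stage1} and \ref{sec:stage2}, since almost all the work has already been isolated into the preceding lemmas. Recall that the weight of the net-tree spanner for $\tilde X = (X,\delta^\alpha)$ is $\tilde R \cdot \eps^{-\Theta(\ddim(\tilde X))} = \tilde R \cdot \eps^{-\Theta(\ddim(M)/\alpha)}$, where $\tilde R$ is the sum of radii of all net-points and we have used that $\ddim(\tilde X) \le \ddim(M)/\alpha$ \cite{GK-11}. Thus it suffices to prove the single inequality $\tilde R = O(\tfrac{1}{1-\alpha}) \cdot \omega(MST(\tilde X))$; dividing through by $\omega(MST(\tilde X))$ then yields the claimed lightness bound $\eps^{-O(\ddim(M)/\alpha)} \cdot \tfrac{1}{1-\alpha}$.

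To bound $\tilde R$, first invoke Stage I: compute a Hamiltonian path $\Pi = (v_1,\ldots,v_n)$ for $\tilde X$ with $\omega(\Pi) = O(\omega(MST(\tilde X)))$ (a standard MST-doubling argument), build the pivot sets $P_i$ greedily along $\Pi$ using the distance threshold $2^{i-1}$ as specified, form the paths $\Pi_i$ with uniform edge weight $\tilde\omega(e) = 2^{i-1}$, and set $\tilde G = \bigcup_{i=0}^{\tilde\ell-1}\Pi_i$. Observation \ref{ob} records the elementary fact that points of $X$ strictly between consecutive $i$-level pivots are pairwise at snowflake distance $< 2^i$; combined with the $2^i$-packing property of $\tilde N_i$, Lemma \ref{lm0} gives $|P_i| \ge |\tilde N_i|$, and Lemma \ref{lm2} upgrades this to $|\tilde E_i| = |P_i| - 1 \ge \tfrac12 |\tilde N_i|$. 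Summing $\tilde\omega(\Pi_i) = |\tilde E_i| \cdot 2^{i-1}$ over all levels and comparing with $\tilde R_i = |\tilde N_i| \cdot 2^i$ then yields Corollary \ref{corwt}: $\tilde W = \tilde\omega(\tilde G) \ge \tilde R / 4$.

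It remains to bound $\tilde W$ from above, which is Stage II and the heart of the argument. Here one applies the charging scheme of Equation (\ref{load2}): each edge $e = (v_j,v_k)$ of $\tilde G$ distributes its weight $\tilde\omega(e)$ over the path edges it loads, in proportion to $\delta(v_l,v_{l+1})/\delta_\Pi(v_j,v_k)$ — crucially using the \emph{original} (non-snowflaked) distances $\delta$ and $\delta_\Pi$. A double-counting identity (Observation \ref{ob2}) gives $\tilde W = \sum_{l}\xi_{(v_l,v_{l+1})}(\tilde G)$, so everything reduces to the per-edge bound of Lemma \ref{lm3}, namely $\xi_{(v_l,v_{l+1})}(\tilde G) = O(\tfrac{1}{1-\alpha})\cdot\delta^\alpha(v_l,v_{l+1})$. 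This is the step I expect to be the main obstacle, and it is where the snowflake hypothesis is genuinely exploited. The idea is: fixing $l$, the path edge $(v_l,v_{l+1})$ is loaded by exactly one edge $e_i$ of each $\Pi_i$, contributing $2^{i-1}\cdot \delta(v_l,v_{l+1})/\delta_\Pi(p^{(i)}_j,p^{(i)}_{j+1})$. Setting $\eta = \delta^\alpha(v_l,v_{l+1})$ and $t = \lceil\log\eta\rceil$, one splits the sum over levels at $t$. For low levels $i \le t$ one uses only $\delta(v_l,v_{l+1}) \le \delta_\Pi(p^{(i)}_j,p^{(i)}_{j+1})$ to get a geometric sum bounded by $O(\eta)$. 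For high levels $i > t$ one uses that $\delta^\alpha(p^{(i)}_j,p^{(i)}_{j+1}) \ge 2^{i-1}$ forces $\delta(p^{(i)}_j,p^{(i)}_{j+1}) \ge \eta^{1/\alpha} \cdot 2^{(i-1-t)/\alpha}$ (taking $1/\alpha$-th powers — this is the snowflake-specific move, since $1/\alpha > 1$), while $2^{i-1} \le 2\eta \cdot 2^{i-1-t}$; the ratio then telescopes into $\sum_{i>t} 2^{(i-1-t)(1 - 1/\alpha)}$, a geometric series with ratio $2^{1-1/\alpha} < 1$ whose sum is $\tfrac{1}{1 - 2^{1-1/\alpha}} = O(\tfrac{1}{1-\alpha})$. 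Adding the two regimes gives Lemma \ref{lm3}. Finally, chaining everything — $\tilde W = \sum_l \xi_{(v_l,v_{l+1})}(\tilde G) = O(\tfrac{1}{1-\alpha})\sum_l \delta^\alpha(v_l,v_{l+1}) = O(\tfrac{1}{1-\alpha})\,\omega(\Pi) = O(\tfrac{1}{1-\alpha})\,\omega(MST(\tilde X))$, then $\tilde R \le 4\tilde W$, then multiplying by the net-tree overhead $\eps^{-O(\ddim(M)/\alpha)}$ — completes the proof.
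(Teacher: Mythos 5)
Your proposal is correct and follows essentially the same route as the paper: the two-stage argument (Stage I reducing $\tilde R$ to the weight $\tilde W$ of the pivot-path graph via Lemmas \ref{lm0}, \ref{lm2} and Corollary \ref{corwt}, Stage II bounding $\tilde W$ by the charging scheme of Equation (\ref{load2}), Observation \ref{ob2} and Lemma \ref{lm3} with the split at level $t=\lceil\log\eta\rceil$ and the geometric series of ratio $2^{1-1/\alpha}$), followed by multiplying $\tilde R$ by the $\eps^{-O(\ddim(M)/\alpha)}$ net-tree overhead. This matches the paper's wrap-up of Theorem \ref{thm:basic} step for step, so there is nothing to add.
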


\section{Applications}\label{sec:apps}
In ICALP'13  Chan {\em et al.\ } \cite{CLNS13} showed that, in any doubling metric, one can build a $(1+\epsilon)$-spanner with 
constant degree and 
with logarithmic diameter and lightness, within $O(n \log n)$ time. 
A close examination of the construction of \cite{CLNS13} shows that its lightness
is dominated by the lightness of the standard net-tree spanner.
Plugging Theorem \ref{thm:basic} in the construction of \cite{CLNS13} gives rise to a logarithmic improvement
in the lightness bound, for snowflake doubling metrics. This result is summarized in the following statement.
\begin{corollary} \label{tm1}
Let $M=(X,\delta)$ be an $n$-point doubling metric.
For any $0 < \alpha < 1$ and   $\eps > 0$,
there exists a $(1+\eps)$-spanner for the corresponding $\alpha$-snowflake metric $(X,\delta^\alpha)$
with degree $\eps^{-O(\frac{\ddim(M)}{\alpha})}$, diameter $O(\log n)$ and lightness 
$\eps^{-O(\frac{\ddim(M)}{\alpha})} \cdot \frac{1}{1-\alpha}$.
The runtime of this construction is $\eps^{-O(\frac{\ddim(M)}{\alpha})}  (n \log n)$. 
\end{corollary}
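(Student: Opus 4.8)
The plan is to invoke the spanner construction of Chan \emph{et al.}\ \cite{CLNS13} essentially as a black box, applied directly to the snowflake metric $\tilde X = (X,\delta^\alpha)$, and then to replace their generic weight bound by the sharper one established in Theorem \ref{thm:basic}. First I would record that $\tilde X$ is itself a doubling metric with $\ddim(\tilde X) \le \ddim(M)/\alpha$, as already noted at the start of Section \ref{sec:second}. Feeding $\tilde X$ into the construction of \cite{CLNS13} then yields, in time $\eps^{-O(\ddim(\tilde X))} \cdot (n\log n) = \eps^{-O(\ddim(M)/\alpha)}(n\log n)$, a $(1+\eps)$-spanner of $\tilde X$ with maximum degree $\eps^{-O(\ddim(\tilde X))} = \eps^{-O(\ddim(M)/\alpha)}$ and hop-diameter $O(\log n)$. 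These three guarantees---stretch, degree, diameter, and runtime---hold for \emph{any} doubling input metric and are immediate from \cite{CLNS13} once the doubling dimension of the input is substituted; they require no new argument and no use of the snowflake structure.

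The crux is the lightness bound, and here the key point is the observation already flagged in the text: in the construction of \cite{CLNS13} the total weight of the output spanner is bounded by a constant times the weight of the standard net-tree spanner of the input metric, plus an additive $O(\omega(\MST))$ term coming from the auxiliary low-weight, low-diameter gadgets layered on top to reduce degree and diameter. Concretely, I would re-derive their lightness analysis with the weight of the net-tree component left as a parameter $W_{\mathrm{NTS}}$, check that the final bound takes the form $O(W_{\mathrm{NTS}}) + O(\omega(\MST))$, and isolate the single place where a $\log n$ factor would otherwise enter---namely the generic doubling-metric bound $R = O(\log n)\cdot\omega(\MST)$ on the sum of net-point radii. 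For the snowflake metric $\tilde X$ this is replaced by the bound $\tilde R = O(\tfrac{1}{1-\alpha})\cdot\omega(\MST(\tilde X))$ of Corollary \ref{corwt} together with Lemma \ref{lm3}, equivalently the net-tree-spanner lightness $\eps^{-O(\ddim(M)/\alpha)}\cdot\tfrac{1}{1-\alpha}$ of Theorem \ref{thm:basic}. Plugging $W_{\mathrm{NTS}} = \eps^{-O(\ddim(M)/\alpha)}\cdot\tfrac{1}{1-\alpha}\cdot\omega(\MST(\tilde X))$ into $O(W_{\mathrm{NTS}}) + O(\omega(\MST))$ gives output lightness $\eps^{-O(\ddim(M)/\alpha)}\cdot\tfrac{1}{1-\alpha}$, as claimed.

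I expect the main obstacle to be exactly this bookkeeping step: one must verify that \cite{CLNS13} uses the net-tree spanner (or a variant of asymptotically the same weight) in a genuinely black-box fashion, so that no part of their degree- and diameter-reduction machinery re-introduces a $\log n$ factor (or any other large factor) into the weight \emph{independently} of the net-tree spanner's own weight---every additional contribution must be either a constant multiple of $W_{\mathrm{NTS}}$ or $O(\omega(\MST))$. Once that re-examination is carried out, the corollary follows by simply assembling the degree, diameter, runtime, and improved lightness bounds.
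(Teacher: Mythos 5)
Your proposal matches the paper's proof: the paper likewise applies the construction of \cite{CLNS13} to the snowflake metric $(X,\delta^\alpha)$ (a doubling metric of dimension $O(\ddim(M)/\alpha)$), observes that the lightness of that construction is dominated by the lightness of the standard net-tree spanner, and substitutes the bound of Theorem \ref{thm:basic} to remove the $\log n$ factor, while the degree, diameter, and runtime guarantees carry over unchanged. Your extra ``bookkeeping'' step of verifying that no other part of \cite{CLNS13} reintroduces a large weight factor is exactly the ``close examination'' the paper asserts, so the two arguments are essentially identical.
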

{\bf Remark.} 
For the 1-dimensional Euclidean metric $\vartheta_n$ discussed in Section \ref{sec:int},
any $(1+\eps)$-spanner with diameter $O(\log n)$ must have lightness  $\Omega(\log n)$ \cite{DES08}. 
In contrast, Corollary \ref{tm1} shows that one can get diameter $O(\log n)$ together with constant lightness
in snowflake doubling metrics.
This reveals a fundamental difference between snowflake and non-snowflake doubling metrics in the context of light spanners.

The construction of \cite{CLNS13} was extended to the fault-tolerant (FT) setting in \cite{CLNS13,Sol13b}.
Plugging Theorem \ref{thm:basic} in the FT constructions of \cite{CLNS13,Sol13b} 
gives rise to a logarithmic improvement
in the lightness bound, for snowflake doubling metrics. This result is summarized in the following statement.
\begin{corollary} \label{tm2}
Let $M=(X,\delta)$ be an $n$-point doubling metric.
For any $0 < \alpha < 1$, any $\eps > 0$ and any integer $0 \le k \le n-2$,
there exists a $k$-FT $(1+\eps)$-spanner for the corresponding $\alpha$-snowflake metric $(X,\delta^\alpha)$
with degree $\eps^{-O(\frac{\ddim(M)}{\alpha})} \cdot k$, diameter $O(\log n)$ and lightness 
$\eps^{-O(\frac{\ddim(M)}{\alpha})} \cdot \frac{1}{1-\alpha}  (k^2)$.
The runtime of this construction is $\eps^{-O(\frac{\ddim(M)}{\alpha})} (n \log n + kn)$. 
\end{corollary}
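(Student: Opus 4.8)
The plan is to derive Corollary \ref{tm2} by combining Theorem \ref{thm:basic} with the fault-tolerant net-tree-spanner constructions of \cite{CLNS13,Sol13b} in essentially black-box fashion. First I would recall the structure of those FT constructions: for a doubling metric they produce, within $O(n\log n + kn)$ time, a $k$-FT $(1+\eps)$-spanner whose degree is $\eps^{-O(\ddim)}\cdot k$, whose hop-diameter is $O(\log n)$, and whose lightness is $\eps^{-O(\ddim)}\cdot k^2$ — except that the lightness bound carries a hidden multiplicative factor equal to the lightness of the underlying standard net-tree spanner, i.e.\ a factor that is $\Theta(\log n)\cdot\eps^{-\Theta(\ddim)}$ in a general doubling metric. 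I would state this explicitly as the ``extraction lemma'': a close reading of \cite{CLNS13,Sol13b} shows that the weight of the FT spanner is bounded by $\eps^{-O(\ddim)}\cdot k^2 \cdot R$, where $R$ is the sum of radii of all net-points of the hierarchical nets used in the construction (equivalently, $k^2\cdot\eps^{-O(\ddim)}$ times the weight of the standard NTS), rather than by $k^2\cdot\eps^{-O(\ddim)}\cdot\omega(MST)$ directly.

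Next I would apply this to the snowflake metric $\tilde X = (X,\delta^\alpha)$. Since $\tilde X$ is itself a doubling metric with $\ddim(\tilde X)\le \ddim(M)/\alpha$, the FT construction of \cite{CLNS13,Sol13b} applies verbatim and gives a $k$-FT $(1+\eps)$-spanner for $\tilde X$ with degree $\eps^{-O(\ddim(M)/\alpha)}\cdot k$, hop-diameter $O(\log n)$, and weight at most $\eps^{-O(\ddim(M)/\alpha)}\cdot k^2\cdot \tilde R$, where $\tilde R$ is the sum of net-point radii for the hierarchical nets of $\tilde X$ — exactly the quantity analyzed in Section \ref{sec:second}. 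The runtime bound $\eps^{-O(\ddim(M)/\alpha)}(n\log n + kn)$ is inherited directly, since the only dependence on the metric is through its doubling dimension, which we have bounded by $\ddim(M)/\alpha$.

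The final step is to substitute the bound on $\tilde R$ proved in Stage I and Stage II. By Corollary \ref{corwt}, $\tilde R \le 4\tilde W$, and by the Wrapping Up argument (Observation \ref{ob2} together with Lemma \ref{lm3}), $\tilde W = O\!\left(\tfrac{1}{1-\alpha}\right)\cdot\omega(MST(\tilde X))$; hence $\tilde R = O\!\left(\tfrac{1}{1-\alpha}\right)\cdot\omega(MST(\tilde X))$, which is precisely the content of Theorem \ref{thm:basic} (before multiplying by the $\eps^{-O(\ddim)}$ cross-edge factor). Plugging this into the weight bound of the previous paragraph yields weight $\eps^{-O(\ddim(M)/\alpha)}\cdot\tfrac{1}{1-\alpha}\cdot k^2\cdot\omega(MST(\tilde X))$, i.e.\ lightness $\eps^{-O(\ddim(M)/\alpha)}\cdot\tfrac{1}{1-\alpha}\cdot k^2$, as claimed. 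The degree, diameter, and runtime bounds are unchanged.

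I expect the main obstacle to be the ``extraction lemma'': one must verify that in both \cite{CLNS13} and \cite{Sol13b} the weight of the FT spanner genuinely decomposes as (combinatorial overhead depending only on $\eps,\ddim,k$) times (the weight of the underlying net-tree spanner, equivalently $\tilde R$ up to $\eps^{-O(\ddim)}$), with no additional additive term that is charged directly to $\omega(MST)$ in a way that reintroduces a $\log n$ factor. This requires tracing through how the FT constructions add edges on top of the net-tree: each added edge at level $i$ has length $O(\gamma\cdot 2^i)=\eps^{-O(1)}\cdot 2^i$ and there are $\eps^{-O(\ddim)}\cdot k$ of them incident to each level-$i$ net-point (and $O(k^2)$ in aggregate per net-point for the FT redundancy), so the total weight telescopes against $\sum_i |\tilde N_i|\cdot 2^i = \tilde R$ exactly as in the non-FT case. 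Once this bookkeeping is confirmed, the rest is a direct substitution of Theorem \ref{thm:basic} and requires no new ideas.
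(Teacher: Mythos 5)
Your proposal is correct and follows essentially the same route as the paper: the paper also derives Corollary \ref{tm2} by observing that the lightness of the fault-tolerant constructions of \cite{CLNS13,Sol13b} is dominated by the lightness of the underlying net-tree spanner (your ``extraction lemma''), and then plugging in Theorem \ref{thm:basic} for the $\alpha$-snowflake metric, with degree, diameter and runtime inherited unchanged. The only difference is one of exposition: you make explicit the bookkeeping that the paper compresses into the phrase ``a close examination of the construction shows that its lightness is dominated by the lightness of the standard net-tree spanner.''
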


We turn to algorithms for TSP. A celebrated result of Arora \cite{A-98} states that for Euclidean TSP,
a $(1+\eps)$-approximate tour in $d$-dimensional space can be computed in time $n (\log n)^{\eps^{-O(d)}}$, and this result 
generalizes to $\ell_p$ spaces as well. Rao and Smith \cite{RS-98} utilized a light Euclidean spanner to reduce
the runtime to $2^{\eps^{-O(d)}}n + 2^{O(d)} n \log n$, and a light $\ell_p$ spanner is needed to extend
their results to $\ell_p$ space. As a consequence of Lemma \ref{lem:lp} we have:

\begin{corollary}\label{cor:lptsp}
Let $X$ be a set of $n$ $d$-dimensional vectors.
A $(1+\eps)$-approximate $\ell_p$ tour for $X$ can be computed in time
$2^{\eps^{-\tilde{O}(d)}}n + 2^{\tilde{O}(d)} n \log n$.
\end{corollary}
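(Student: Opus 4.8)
The plan is to follow the Rao--Smith framework for Euclidean TSP and observe that the only ingredient specific to $\ell_2$ is the existence of a light spanner together with the ``banyan''/patching machinery that runs on top of it. Concretely, Rao and Smith show that given a light $(1+\eps')$-spanner of lightness $\lambda$ on $n$ points in $d$-dimensional space, one can compute a $(1+\eps)$-approximate TSP tour in time $2^{\mathrm{poly}(1/\eps,\lambda,d)} n + 2^{O(d)} n\log n$; the spanner is used both to restrict attention to a sparse graph of near-MST weight and to feed the dynamic program over a randomly shifted dissection. The first step is therefore to instantiate their reduction with $p$ in place of $2$: run the $\ell_2$ light spanner of Theorem~\ref{thm:l2} with accuracy parameter $\eps_0 = \Theta((\eps/d')^2)$ where $d' = \max\{d^{1/2-1/p}, d^{1/p-1/2}\}$, and invoke Lemma~\ref{lem:lp} to conclude that the resulting edge set $E$ is simultaneously a $(1+\eps)$-spanner for $(X,\ell_p)$ and has $\ell_p$-weight $w_p(E) \le c\,d'\cdot w_p(\MST_p(X))$ with $c = \eps_0^{-O(d)} = \eps^{-O(d)}d^{O(d)} = \eps^{-\tilde O(d)}$.

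Next I would verify that the rest of the Rao--Smith pipeline is norm-agnostic. The randomly shifted quadtree/dissection decomposition and the ``$m$-guillotine'' or portal-based patching arguments only use: (i) that the $\ell_p$ ball of radius $r$ fits inside an axis-parallel cube of side $2r$ and contains one of side $2r/d^{1/p}$, a routine comparison between $\|\cdot\|_p$ and $\|\cdot\|_\infty$ costing only a $\mathrm{poly}(d)$ factor; and (ii) that crossing a dissection cell boundary can be rerouted through $O((d/\eps)^{d-1})$ portals at additive cost $\eps/\log n$ times the cell diameter, which again is a metric inequality robust to the choice of $\ell_p$ norm up to $\mathrm{poly}(d)$ losses. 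Folding these $\mathrm{poly}(d)$ and $d^{O(d)}$ slack factors into the exponent turns the clean $2^{\eps^{-O(d)}}$ of the Euclidean statement into $2^{\eps^{-\tilde O(d)}}$, exactly the bound claimed; the $2^{\tilde O(d)} n\log n$ term is the cost of building the spanner and sorting points into the quadtree. So the structure of the argument is: (1) quote Theorem~\ref{thm:l2} + Lemma~\ref{lem:lp} to get a light $\ell_p$ spanner with the stated parameters; (2) quote Rao--Smith, checking the norm-dependent constants; (3) collect exponents.

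The main obstacle is not conceptual but bookkeeping: one must confirm that every step of \cite{RS-98} that is phrased for Euclidean distance degrades by at most a $\mathrm{poly}(d)$ or $d^{O(d)}$ factor rather than by something that depends on $n$, since only then does the $d'$ distortion from Lemma~\ref{lem:lp} and the $d^{O(d)}$ blowup from setting $\eps_0=(\eps/d')^2$ stay inside the exponent of the $2^{\eps^{-\tilde O(d)}}$ term. The patching/portal steps are the delicate ones, because there the additive error per boundary crossing is measured against the cell diameter, and switching from $\ell_2$ to $\ell_p$ rescales that diameter by a factor between $1$ and $d^{|1/2-1/p|}$; one has to track that this rescaling is absorbed when one chooses the number of portals to be $(d/\eps)^{O(d)}$ rather than $(1/\eps)^{O(d)}$. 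Once that is done, the corollary follows by simply substituting the spanner lightness $\eps^{-\tilde O(d)}$ into the Rao--Smith running-time bound.
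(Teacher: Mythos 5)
Your proposal takes essentially the same route as the paper: the paper also obtains the light $\ell_p$ spanner by running the Euclidean construction of Theorem~\ref{thm:l2} with accuracy $O((\eps/d')^2)$ and applying Lemma~\ref{lem:lp}, and then feeds it into the Rao--Smith framework (noting that Arora's machinery extends to $\ell_p$), which is exactly your plan. Your version merely spells out the norm-dependent bookkeeping that the paper leaves implicit, and it is correct.
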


The best runtime for producing a $(1+\eps)$-approximate metric tour for an $n$-point metric $X$ is about 
$n^{2^{O(\ddim(X))}}$ for fixed $\eps$ \cite{BGK12}. For the $\alpha$-snowflake of $X$, 
Theorem \ref{thm:basic} provides a light spanner. Using an appropriate net hierarchy 
(see \cite{BG-13}, for example) we can utilize the machinery of Rao and Smith \cite{RS-98} and 
obtain a significantly better result.

\begin{corollary}
Let $M=(X,\delta)$ be an $n$-point doubling metric.
A $(1+\eps)$-approximate metric tour for the $\alpha$-snowflake of $M$ 
can be computed in time $2^{\frac{1}{1-\alpha} \eps^{-\tilde{O}(\ddim(M)/\alpha)}}n + 2^{\tilde{O}(\ddim(M)/\alpha)} n \log n$.
\end{corollary}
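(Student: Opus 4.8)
The plan is to combine the light snowflake spanner of Theorem~\ref{thm:basic} with the traveling-salesman framework of Rao and Smith~\cite{RS-98}, replacing the randomly shifted Euclidean quadtree of that framework by a randomly shifted hierarchical net of the snowflake metric.

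Write $\tilde X = (X,\delta^\alpha)$ and set $d' = \ddim(M)/\alpha$, which upper bounds $\ddim(\tilde X)$. First I would invoke Corollary~\ref{tm1} to build, in time $\eps^{-O(d')}(n\log n)$, a $(1+\eps_1)$-spanner $H$ for $\tilde X$ of lightness $\Lambda = \eps_1^{-O(d')}\cdot\frac{1}{1-\alpha}$ (for $\eps_1 = \Theta(\eps)$). Exactly as in the Euclidean reduction, since $H$ is a $(1+\eps_1)$-spanner we may restrict attention to tours that traverse only edges of $H$: replacing each edge of an optimal tour by its shortest path in $H$ and then short-cutting repeated vertices yields a Hamiltonian tour of length at most $(1+\eps_1)$ times optimal. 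So it suffices to find a near-optimal tour among those using only $H$-edges.

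Next I would set up the decomposition. In place of the randomly shifted quadtree I would take a randomly shifted hierarchical net of $\tilde X$ as in~\cite{BG-13,BGK12}; by the doubling property the resulting laminar family of clusters has branching $2^{O(d')}$ per level, a cluster at scale $2^i$ has diameter $O(2^i)$, and each such cluster is split into $2^{O(d')}$ sub-clusters at scale $2^{i-1}$. On each cluster boundary one places a net of $\eps_1^{-O(d')}$ portals, of which only $2^{O(d')}$ per scale are relevant to any given cluster. Two facts drive the analysis: the doubling property bounds the local complexity just described, and the lightness of $H$ ensures, after a random shift, that the expected total $H$-weight crossing clusters, aggregated over scales, is $O(\Lambda)\cdot\omega(\MST(\tilde X))$; a portal-rerouting (patching) step then modifies the $H$-edge tour so that it enters and leaves each cluster only $\mathrm{poly}(1/\eps_1,2^{d'})$ times at a cumulative cost of a $(1+\eps)$ factor. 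Running the Rao--Smith dynamic program over this laminar family — the state of a cluster recording how the patched, $H$-restricted tour pairs up its portal crossings, combined from the states of its $2^{O(d')}$ children — has $2^{\mathrm{poly}(1/\eps_1,\,2^{d'},\,\Lambda)} = 2^{\frac{1}{1-\alpha}\eps^{-\tilde O(d')}}$ states per cluster and $O(n)$ clusters, so together with the spanner- and hierarchy-construction time $\eps^{-O(d')}n\log n$ this yields the claimed bound $2^{\frac{1}{1-\alpha}\eps^{-\tilde O(\ddim(M)/\alpha)}}n + 2^{\tilde O(\ddim(M)/\alpha)}n\log n$.

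The main obstacle is the patching step in the doubling regime: one must show that the light spanner of Theorem~\ref{thm:basic}, combined with a random shift of the net hierarchy, permits rerouting the tour through only a bounded number of portals on each cluster boundary at $(1+\eps)$ cost. In Euclidean space this is the classical patching lemma, which leans on grid geometry and on the leapfrog-based light banyan; here it must be re-established from lightness and doubling dimension alone. The remaining pieces — the reduction to $H$-edge tours, the net hierarchy, and the dynamic-program combination — are routine adaptations of~\cite{RS-98,BG-13,BGK12}.
\QED
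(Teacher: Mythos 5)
Your proposal follows essentially the same route as the paper: the paper's entire justification is the one-sentence remark that Theorem~\ref{thm:basic} supplies a light spanner for the snowflake, which, together with an appropriate net hierarchy (as in \cite{BG-13}) plugged into the Rao--Smith machinery \cite{RS-98,BGK12}, yields the stated runtime -- exactly the plan you spell out. The ``obstacle'' you flag (a patching/portal argument in the doubling regime rather than from grid geometry) is not resolved in the paper either; it is delegated to the cited machinery of \cite{BGK12,BG-13}, so your sketch is at least as detailed as the paper's own treatment.
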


\section*{Acknowledgments}
The authors are grateful to Michael Elkin, Ofer Neiman and Michiel Smid for helpful discussions.

\bibliographystyle{alpha}
\bibliography{latex8,adi}

\newcommand{\etalchar}[1]{$^{#1}$}
\begin{thebibliography}{ADM{\etalchar{+}}95}

\bibitem[ABN08]{ABN-08}
Ittai Abraham, Yair Bartal, and Ofer Neiman.
\newblock Embedding metric spaces in their intrinsic dimension.
\newblock In {\em Proceedings of the Nineteenth Annual ACM-SIAM Symposium on
  Discrete Algorithms}, SODA '08, pages 363--372. Society for Industrial and
  Applied Mathematics, 2008.

\bibitem[ACC{\etalchar{+}}96]{ACCDSZ-96}
Srinivasa Arikati, Danny~Z. Chen, L.~Paul Chew, Gautam Das, Michiel Smid, and
  ChristosD. Zaroliagis.
\newblock Planar spanners and approximate shortest path queries among obstacles
  in the plane.
\newblock In Josep Diaz and Maria Serna, editors, {\em ESA '96}, volume 1136 of
  {\em Lecture Notes in Computer Science}, pages 514--528. 1996.

\bibitem[ADD{\etalchar{+}}93]{ADDJS93}
I.~Alth$\ddot{\mbox{o}}$fer, G.~Das, D.~P. Dobkin, D.~Joseph, and J.~Soares.
\newblock On sparse spanners of weighted graphs.
\newblock {\em Discrete \& Computational Geometry}, 9:81--100, 1993.

\bibitem[ADM{\etalchar{+}}95]{ADMSS95}
S.~Arya, G.~Das, D.~M. Mount, J.~S. Salowe, and M.~H.~M. Smid.
\newblock {E}uclidean spanners: short, thin, and lanky.
\newblock In {\em Proc. of 27th STOC}, pages 489--498, 1995.

\bibitem[AMS94]{AMS94}
S.~Arya, D.~M. Mount, and M.~H.~M. Smid.
\newblock Randomized and deterministic algorithms for geometric spanners of
  small diameter.
\newblock In {\em Proc. of 35th FOCS}, pages 703--712, 1994.

\bibitem[Aro98]{A-98}
Sanjeev Arora.
\newblock Polynomial time approximation schemes for euclidean traveling
  salesman and other geometric problems.
\newblock {\em J. ACM}, 45:753--782, 1998.

\bibitem[AS97]{AS97}
S.~Arya and M.~H.~M. Smid.
\newblock Efficient construction of a bounded degree spanner with low weight.
\newblock {\em Algorithmica}, 17(1):33--54, 1997.

\bibitem[Ass83]{A-83}
P.~Assouad.
\newblock Plongements lipschitziens dans {${\bf R}\sp{n}$}.
\newblock {\em Bull. Soc. Math. France}, 111(4):429--448, 1983.

\bibitem[BG13]{BG-13}
Yair Bartal and Lee-Ad Gottlieb.
\newblock A linear time approximation scheme for euclidean tsp.
\newblock In {\em FOCS '13}, 2013.

\bibitem[BGK12]{BGK12}
Y.~Bartal, L.~Gottlieb, and R.~Krauthgamer.
\newblock The traveling salesman problem: low-dimensionality implies a
  polynomial time approximation scheme.
\newblock In {\em Proc. of 44th STOC}, pages 663--672, 2012.

\bibitem[BRS11]{BRS-11}
Yair Bartal, Ben Recht, and Leonard~J. Schulman.
\newblock Dimensionality reduction: Beyond the johnson-lindenstrauss bound.
\newblock In {\em Proceedings of the Twenty-Second Annual ACM-SIAM Symposium on
  Discrete Algorithms}, SODA '11, pages 868--887. SIAM, 2011.

\bibitem[CG06]{CG06}
H.~T.-H. Chan and A.~Gupta.
\newblock Small hop-diameter sparse spanners for doubling metrics.
\newblock In {\em Proc. of 17th SODA}, pages 70--78, 2006.

\bibitem[CLNS13]{CLNS13}
T.-H.~Hubert Chan, Mingfei Li, Li~Ning, and Shay Solomon.
\newblock New doubling spanners: Better and simpler.
\newblock In {\em Proc. of the 40th ICALP (1)}, pages 315--327, 2013.

\bibitem[DES08]{DES08}
Y.~Dinitz, M.~Elkin, and S.~Solomon.
\newblock Shallow-low-light trees, and tight lower bounds for {E}uclidean
  spanners.
\newblock In {\em Proc. of 49th FOCS}, pages 519--528, 2008.

\bibitem[DHM10]{DMM-10}
Erik~D. Demaine, Mohammad~Taghi Hajiaghayi, and Bojan Mohar.
\newblock Approximation algorithms via contraction decomposition.
\newblock {\em Combinatorica}, 30(5):533--552, 2010.

\bibitem[DHN93]{DHN93}
G.~Das, P.~J. Heffernan, and G.~Narasimhan.
\newblock Optimally sparse spanners in 3-dimensional {E}uclidean space.
\newblock In {\em Proc. of 9th SOCG}, pages 53--62, 1993.

\bibitem[ES13]{ES13}
Michael Elkin and Shay Solomon.
\newblock Optimal {E}uclidean spanners: really short, thin and lanky.
\newblock In {\em Proc. of the 45th STOC}, pages 646--654, 2013.

\bibitem[GH12]{GH-12}
Michelangelo Grigni and Hao-Hsiang Hung.
\newblock Light spanners in bounded pathwidth graphs.
\newblock In Branislav Rovan, Vladimiro Sassone, and Peter Widmayer, editors,
  {\em Mathematical Foundations of Computer Science 2012}, volume 7464 of {\em
  Lecture Notes in Computer Science}, pages 467--477. Springer Berlin
  Heidelberg, 2012.

\bibitem[GK11]{GK-11}
Lee-Ad Gottlieb and Robert Krauthgamer.
\newblock A nonlinear approach to dimension reduction.
\newblock In {\em Proceedings of the Twenty-Second Annual ACM-SIAM Symposium on
  Discrete Algorithms}, SODA '11, pages 888--899. SIAM, 2011.

\bibitem[GKL03]{GKL03}
A.~Gupta, R.~Krauthgamer, and J.~R. Lee.
\newblock Bounded geometries, fractals, and low-distortion embeddings.
\newblock In {\em Proc. of 44th FOCS}, page 534?543, 2003.

\bibitem[GR08a]{GR081}
L.~Gottlieb and L.~Roditty.
\newblock Improved algorithms for fully dynamic geometric spanners and
  geometric routing.
\newblock In {\em Proc. of 19th SODA}, pages 591--600, 2008.

\bibitem[GR08b]{GR082}
L.~Gottlieb and L.~Roditty.
\newblock An optimal dynamic spanner for doubling metric spaces.
\newblock In {\em Proc. of 16th ESA}, pages 478--489, 2008.
\newblock Another version of this paper is available via
  \url{http://cs.nyu.edu/~adi/spanner2.pdf}.

\bibitem[HPM06]{HM-06}
Sariel Har-Peled and Manor Mendel.
\newblock Fast construction of nets in low-dimensional metrics and their
  applications.
\newblock {\em SIAM J. Comput.}, 35(5):1148--1184, 2006.

\bibitem[KPX08]{KPX-08}
Iyad~A. Kanj, Ljubomir Perković, and Ge~Xia.
\newblock Computing lightweight spanners locally.
\newblock In {\em Distributed Computing}, volume 5218 of {\em Lecture Notes in
  Computer Science}, pages 365--378. 2008.

\bibitem[LMN04]{LMN-04}
James~R. Lee, Manor Mendel, and Assaf Naor.
\newblock Metric structures in $l_1$: Dimension, snowflakes, and average
  distortion.
\newblock In Martín Farach-Colton, editor, {\em LATIN 2004: Theoretical
  Informatics}, volume 2976 of {\em Lecture Notes in Computer Science}, pages
  401--412. Springer Berlin Heidelberg, 2004.

\bibitem[Nei13]{N-13}
Ofer Neiman.
\newblock Low dimensional embedding of doubling metrics.
\newblock In {\em 11th Workshop on Approximation and Online Algorithms,
  September 2013}, 2013.

\bibitem[NN12]{NN-12}
Assaf Naor and Ofer Neiman.
\newblock Assouad’s theorem with dimension independent of the snowflaking.
\newblock {\em Revista Matematica Iberoamericana}, 28(4):1--21, 2012.

\bibitem[RS98]{RS-98}
Satish~B. Rao and Warren~D. Smith.
\newblock Approximating geometrical graphs via ``spanners'' and ``banyans''.
\newblock In {\em 30th annual ACM symposium on Theory of computing}, pages
  540--550. ACM, 1998.

\bibitem[Smi09]{Smid09}
M.~H.~M. Smid.
\newblock The weak gap property in metric spaces of bounded doubling dimension.
\newblock In {\em Proc. of Efficient Algorithms}, pages 275--289, 2009.

\bibitem[Sol11]{Sol11}
S.~Solomon.
\newblock An optimal time construction of {E}uclidean sparse spanners with tiny
  diameter.
\newblock In {\em Proc. of 22nd SODA}, pages 820--839, 2011.

\bibitem[Sol13]{Sol13b}
S.~Solomon.
\newblock From hierarchical partitions to hierarchical covers: Optimal
  fault-tolerant spanners for doubling metrics.
\newblock {\em {T}echnical Report, CoRR abs/1304.8135}, April, 2013.

\end{thebibliography}

\end {document}